\documentclass[letterpaper, conference, onecolumn, 11pt]{IEEEtran}
\usepackage{lics}
%

\usepackage[margin=0.9in]{geometry}

\usepackage{microtype}
\usepackage{graphicx}
\usepackage{latexsym}
\usepackage[utf8]{inputenc}
\usepackage{booktabs}
\usepackage{arydshln}
\usepackage{enumerate}
\usepackage{graphicx} 
\usepackage{amssymb,latexsym,times,amsmath,xspace}
\usepackage{enumerate,verbatim}
\usepackage{rotating}
\usepackage{environ}
\usepackage{multicol,esvect,relsize}
\usepackage{tikz,hyperref}
\usepackage{forest}

\usepackage{amsthm,amssymb}

\hyphenation{non-de-ter-mi-nis-tic}

\newcommand{\logicFont}[1]{\protect\ensuremath{\mathrm{#1}}\xspace}

\newcommand{\classFont}[1]{\protect\ensuremath{\mathsf{#1}}\xspace}


\newcommand{\FO}{\logicFont{FO}}





\newcommand{\PSPACE}{\classFont{PSPACE}}

\newcommand{\ER}{\exists \mathbb{R}}
\newcommand{\NP}{\classFont{NP}}

\newcommand{\PTIME}{\classFont{P}}



\newcommand{\A}{\mathfrak{A}}

\newcommand{\Q}{\mathbb{Q}}
\newcommand{\Qs}{\mathbb{Q}^{<}}

\newcommand{\T}{\mathcal{T}}

\newcommand{\calC}{\mathcal{C}}

\newcommand{\NN}{\mathcal{N}}

\newcommand{\G}{\mathcal{G}}

\newcommand{\dom}{\operatorname{dom}}

\newcommand{\RE}{\mathbb{R}}
\newcommand{\ERE}{\exists\mathbb{R}}
\newcommand{\EREs}{\exists\mathbb{R}^{<}}
\newcommand{\REs}{\mathbb{R}^{<}}



\newcommand{\ESO}{{\rm ESO}}

\newcommand{\ESOr}{\ESO_{\mathbb{R}}}

\newcommand{\EFO}{{\exists\FO}}

\newcommand{\vcol}{\mathrel{\mathop:}}
\newcommand{\dfn}{\vcol=}
\newcommand{\ddfn}{\vcol\vcol=}
\newcommand{\ddnf}{\ddfn}


\newcommand\bigexists{%
  \mathop{\lower0.75ex\hbox{\ensuremath{%
    \mathlarger{\mathlarger{\mathlarger{\mathlarger{\exists}}}}}}}%
  \limits}

\newtheorem{theorem}{Theorem}
\newtheorem{definition}[theorem]{Definition}

\newtheorem{corollary}[theorem]{Corollary}
\newtheorem{lemma}[theorem]{Lemma}

\newcommand{\miika}[1]{\todo[inline]{Miika: #1}}
\newcommand{\teemu}[1]{\todo[inline, color=blue!20]{Teemu: #1}}

\newcommand{\jonni}[1]{\todo[inline, color=red!20]{Jonni: #1}} 

\usepackage[mathlines]{lineno}
\usepackage{etoolbox}

\newcommand*\linenomathpatch[1]{%
  \cspreto{#1}{\linenomath}%
  \cspreto{#1*}{\linenomath}%
  \csappto{end#1}{\endlinenomath}%
  \csappto{end#1*}{\endlinenomath}%
}
\newcommand*\linenomathpatchAMS[1]{%
  \cspreto{#1}{\linenomathAMS}%
  \cspreto{#1*}{\linenomathAMS}%
  \csappto{end#1}{\endlinenomath}%
  \csappto{end#1*}{\endlinenomath}%
}

\expandafter\ifx\linenomath\linenomathWithnumbers
  \let\linenomathAMS\linenomathWithnumbers
  \patchcmd\linenomathAMS{\advance\postdisplaypenalty\linenopenalty}{}{}{}
\else
  \let\linenomathAMS\linenomathNonumbers
\fi

\linenomathpatch{equation}
\linenomathpatchAMS{gather}
\linenomathpatchAMS{multline}
\linenomathpatchAMS{align}
\linenomathpatchAMS{alignat}
\linenomathpatchAMS{flalign}

\makeatletter
\patchcmd{\mmeasure@}{\measuring@true}{
  \measuring@true
  \ifnum-\linenopenaltypar>\interdisplaylinepenalty
    \advance\interdisplaylinepenalty-\linenopenalty
  \fi
  }{}{}
\makeatother

\ifCLASSINFOpdf
\else
\fi
\newcommand*{\ex}{\exists \mathbb{R}}
\newcommand*{\exexp}{\ex_{\exp}}
\newcommand*{\exexps}{\ex_{\exp}^{<}}
\newcommand{\ETR}{\operatorname{ETR}}
\newcommand{\FEASFOUR}{\operatorname{4-FEAS}}
\newcommand{\FEASFOURtau}{\operatorname{4-FEAS_\tau}}
\newcommand{\ECF}{\operatorname{ECF}}

%


\hyphenation{op-tical net-works semi-conduc-tor}

\begin{document}
%
\title{\hspace{-5mm}Complexity of Neural Network Training and ETR:\\
\hspace{-5mm}Extensions with Effectively Continuous Functions
}

\author{
\IEEEauthorblockN{Teemu Hankala,\\ Miika Hannula, \\and Juha Kontinen}
\IEEEauthorblockA{University of Helsinki\\
 Helsinki, Finland\\
Email: \{teemu.hankala,miika.hannula,juha.kontinen\}@helsinki.fi
}
\and
\IEEEauthorblockN{Jonni Virtema}
\IEEEauthorblockA{University of Sheffield\\ Sheffield, UK\\
University of Helsinki\\ Helsinki, Finland\\
Email: j.t.virtema@sheffield.ac.uk}
}

%


\maketitle

\begin{abstract}
We study the complexity of the problem of training neural networks defined via various activation functions. The training problem is known to be $\exists\mathbb{R}$-complete with respect to linear activation functions and the ReLU activation function. We consider the complexity of the problem with respect to the sigmoid activation function and other effectively continuous functions. We show that these training problems are polynomial-time many-one bireducible to the existential theory of the reals extended with the corresponding activation functions. In particular, we establish that the sigmoid activation function leads to the existential theory of the reals with the exponential function. It is thus open, and equivalent with the decidability of the existential theory of the reals with the exponential function, whether training neural networks using the sigmoid activation function is algorithmically solvable. In contrast, we obtain that the training problem is undecidable if sinusoidal activation functions are considered. Finally, we obtain general upper bounds for the complexity of the training problem in the form of low levels of the arithmetical hierarchy.
\end{abstract}


%
\IEEEpeerreviewmaketitle

\section{Introduction}

We study the computational complexity of neural network (NN) training problems parameterized by various activation functions. Such a training problem asks,
given a neural network architecture, finite training data, a cost function and a threshold, whether there exist real-valued edge weights and neuron biases for the network such that the total training error with respect to the training data is below the given threshold value. Neural network training problems have been studied extensively (see \cite{Jiri02} for a comprehensive survey  of the history) and recently a version of this problem was shown to be complete for the complexity class $\ex$ \cite{AbrahamsenKM21,Bertschinger22}. The aforementioned complexity class  $\ex$  is defined as the class of decision problems $P$ that are polynomial-time reducible to the
existential theory of the reals, that is, to the corresponding decision problem $\ETR$ \cite{SchaeferS17}.
In other words, there is a polynomial-time computable function $f$ such that $x\in P$ if and only if $f(x)$ is an existential first-order ($\FO$) formula that is true over the structure $(\mathbb{R}, +,\times,<,=, 0,1)$ of the ordered field of the real numbers. The latter question is easily seen to be equivalent to deciding whether a finite system of polynomial equations and inequations with integer coefficients and real unknowns has a solution. 

The class $\exists\mathbb{R}$ has turned out to have many interesting complete problems, for instance the so-called art gallery problem \cite{AbrahamsenAM22},  two-dimensional packing problems \cite{AbrahamsenMS20}, and certain decision problems on symmetric Nash Equilibria  \cite{BiloM17}. The activation functions in the $\ex$-complete training problem mentioned above are restricted to linear functions and the rectified linear unit (ReLU) activation function. In practice, there are also other useful activation functions such as the standard logistic sigmoid function, which is defined in terms of the exponential function, and sinusoidal activation functions.
However, whereas $\ETR$ is $\NP$-hard and included in $\PSPACE$ \cite{Canny88},
it is a long-standing and influential open problem posed
by Alfred Tarski whether exponential arithmetic is decidable. This problem is known to be related to another major open question -- \emph{Schanuel's conjecture} -- in transcendental number theory (see, e.g., \cite{Wilkie1997,Servi08}). On the other hand, the theory of the reals extended by the sine function
is known to be undecidable, even under some further restrictions on the allowed syntax of the formulae \cite{Richardson68}.

In this article, we generalize the connection between $\ETR$ and the neural network training problem from linear functions and the ReLU activation function to arbitrary
real-valued functions.
On the $\ETR$ side, the corresponding activation function is added as a new function symbol to the signature of the underlying structure   $(\mathbb{R}, +,\times,<,=, 0,1)$. We show that
the NN training problem using $f$ as an activation function together with the identity activation function is complete for the extension $\ex_f$ of the class $\ex$. In other words,  this training problem is polynomial-time many-one bireducible to
the existential theory of the reals extended with $f$.
Besides a single activation function $f$, we allow the use of a set $\tau$ of activation functions
both in the training problem and in the definition of complexity classes of the form $\ex_\tau$.
%
Our result implies that the decidability of the training problem of neural networks using the sigmoid activation function is equivalent to the decidability of (the existential fragment of) exponential arithmetic. In fact, due to a certain model theoretic property (i.e., \emph{model completeness}) of the theory of exponential arithmetic, this theory is decidable if and only if its existential fragment is decidable, and for showing  decidability,  it suffices to show it to be recursively enumerable ($\Sigma^0_1$) 
(see, e.g., \cite{Servi08} for further discussion).  Another immediate consequence is that  the neural network training problem is undecidable if the sine function is allowed to be used for neural activation. Interestingly, sinusoidal activation functions have already been observed to be hard to train in simulations, and have been analyzed, for instance, in \cite{Lapedes}.

Moreover, we establish upper bounds for the complexity of the NN training problem in the case of \emph{effectively continuous} activation functions.
The class of effectively continuous functions (see \cite{Servi08}) contains, for example, the sigmoid function and sinusoidal activation functions.
We show, using rational approximations and basic topological properties, that it is possible to place these problems and the theories $\ex_\tau$ into $\Sigma^0_3$, the third level of the arithmetical hierarchy.
We also show that for effectively continuous functions, the complexity of $\ex_\tau$ drops to  $\Sigma^0_1$ if the use of the equality sign is disallowed in the formulae.
However, it is not known whether this seeming distinction is real or not, for instance, in the case of the
exponential function and $\exexp$. As a matter of fact, this question can be shown to be equivalent to
Tarski's exponential function problem.
Note that disallowing equalities in favour of the strict order relation has no implications in the complexity of $\ETR$, as has been shown in \cite{SchaeferS17}.
In addition, the neural network training problem can be restricted in a manner which
is contained in the corresponding strictly ordered variant of the class $\ex_\tau$.
Under this formulation, the set of weights and biases satisfying an instance of the training problem can be
observed to be \emph{stable} in the sense that it is an open set in the standard Euclidean topology.

In the next two sections, we will present model-theoretic definitions and properties
of the complexity classes that are obtained by extending $\ETR$ with new real-valued functions.
We give an emphasis on effectively continuous functions and give a short summary of their basic properties, restated from \cite{Servi08}.
Next, we formulate such variants of the NN training problem parameterized by different
activation functions that are complete for the corresponding extensions of the complexity class $\ex$.
In the last section, we formulate some shortcomings of our neural network construction as open questions.

\section{Preliminaries}
For real numbers $a$ and $b$, we let $(a,b)$ and $[a,b]$ denote the open and closed intervals with endpoints $a$ and $b$, respectively. We write $\lvert a \rvert$ for the absolute value of $a$. For sets $A$ and $B$, we write $A^B$ to denote the set of all functions with domain $B$ and co-domain $A$.

We assume familiarity with basic complexity classes such as $\PTIME$, $\NP$ and $\PSPACE$, and briefly review some concepts from computability theory (see, e.g., \cite{Arora09}).
In this context, it is customary to assume that finite objects are encoded as natural numbers. A relation of natural numbers $S \subseteq \mathbb{N}^n$, for a fixed $n\geq 1$, is \emph{computable} (or \emph{decidable}) if there exists a Turing machine that given any tuple $x \in \mathbb{N}^n$ decides whether or not $x \in S$. The \emph{arithmetical hierarchy} is formed by the classes of sets $\Sigma^0_k$ and $\Pi^0_k$, $k \in \mathbb{N}$, defined recursively as follows: 
\begin{itemize}
\item The class $\Sigma^0_0=\Pi^0_0$ consists of all computable relations $S$ over $\mathbb{N}$.
\item The class $\Sigma^0_{k+1}$ consists of all relations $S(x_1, \dots, x_n)$ that are definable by a sentence of the form \\
$ \exists y_1 \dots y_m R(x_1, \dots, x_n, y_1, \dots, y_m)$, where $R$ is a relation in $\Pi^0_k$.
\item The class $\Pi^0_{k+1}$ consists of all relations $S(x_1, \dots, x_n)$ that are definable by a sentence of the form \\
$ \forall y_1 \dots y_m R(x_1, \dots, x_n, y_1, \dots, y_m)$, where $R$ is a relation in $\Sigma^0_k$.
\end{itemize}
The classes $\Sigma^0_1$ and $\Pi^0_1$ thus form the sets of \emph{recursively enumerable} relations and \emph{co-recursively enumerable} relations, respectively. In addition, a well-known consequence of Post's theorem is that for each natural number $k$ the strict inclusions
$\Sigma^0_k \subsetneq \Sigma^0_{k+1}$ and $\Pi^0_k \subsetneq \Pi^0_{k+1}$ hold.





The first-order language of \emph{real arithmetic}, written $\FO(+,\times,<,=,0,1)$,
is given by the grammar
\begin{equation}\label{eq:synt}
\phi \ddnf i<i \mid i= i \mid \phi \land \phi \mid \phi \lor \phi \mid \exists x \phi \mid \forall x \phi\text{,}
\end{equation}
where $i$ stands for numerical terms given by the grammar
\[
i \ddnf 0 \mid 1 \mid x \mid i \times i \mid i+i\text{,} 
\quad \text{where $x$ is a first-order variable.}
\]
Note that adding negation to \eqref{eq:synt} does not increase the expressiveness of the language. A negated formula can be expressed positively via a negation normal form transformation and subsequent positive rewriting of negated atomic formulae.


Furthermore, we consider various extensions of the first-order language of real arithmetic. For a collection of relation and function symbols $\calC$, we write $\FO(\calC)$ to denote the variant of $\FO(+,\times,<,=,0,1)$ that uses only function symbols and relation symbols in $\calC$. 
\emph{Existential real arithmetic} $\EFO(+,\times,<,=,0,1)$ is obtained from \eqref{eq:synt} by dropping universal quantification. The logic
$\EFO(\calC)$ is defined analogously.

A formula $\phi$ is \emph{open} if some variable appears free in $\phi$, and otherwise \emph{closed}. Closed formulae are referred to as \emph{sentences}. Given a first-order structure $\A$ and a variable assignment $s$, we write $\A \models_s \phi$ in the case that $\phi$ is true in $\A$ with respect to $s$. If $\phi$ is a sentence, we write $\A \models \phi$ if $\phi$ is true in $\A$.

The semantics for the language of real arithmetic is defined over the fixed structure $(\RE, +, \times, <, =, 0, 1)$ of real arithmetic in the usual way, and similarly for $\FO(\calC)$ where the symbols in $\calC$ have their usual interpretations.
As a slight abuse of notation, we use $f$ to denote both a real-valued function
and the corresponding function symbol.
Moreover, we allow function symbols to be interpreted as partial functions. Any atomic formula
with some undefined term with respect to an assignment $s$ is defined to be false.

We use shorthands $\RE$ and $\REs$ for the models $(\RE, +, \times, <, =, 0, 1)$ and $(\RE, +, \times, <, 0, 1)$, respectively. For a collection $\tau$ of additional functions,
we write $\RE_\tau$ and $\REs_\tau$ for the corresponding extension of
the models $\RE$ and $\REs$, respectively.
%
We define the complexity class
$\ERE_\tau$
as the collection of decision problems that have a polynomial-time many-one reduction to closed formulae of 
the existential real arithmetic with additional functions from $\tau$.
We also consider a subclass of
$\ERE_\tau$
defined in terms of strict inequalities: the class
$\EREs_\tau$
is defined otherwise as
$\ERE_\tau$,
except that we drop equality atoms $i = i$ from \eqref{eq:synt}.


\section{The Existential Theory of the Reals in the Arithmetical Hierarchy}

The full first-order theory of the ordered field of the real numbers admits an efficient version
of quantifier elimination and thus is decidable, a result first shown by Alfred Tarski.
Furthermore, the satisfiability problem of the existential fragment of this theory is known to be in $\PSPACE$ \cite{Canny88}. In general, though, it is not known if the corresponding
existential theory is decidable
when new function symbols are added to the underlying model.
The addition of the sine function is known to lead to undecidability, whereas Tarski's
exponential function problem has been famously open since the 1950s.


In this section we show that if the added functions are assumed
to satisfy certain fairly general computable topological properties,
the satisfiability problem of the resulting existential theories can be bounded from above
using low levels of the arithmetical hierarchy.
Specifically, to this end, we use the following definition of effectively continuous
functions from \cite{Servi08}.


\begin{definition}[{\cite[Definition 4.2.1]{Servi08}}]
A function $f \colon \mathbb{R}^n \to \mathbb{R}$ is called \emph{effectively continuous} if there exists a computable function $g$ with the following properties:
\begin{itemize}
\item
To each $n$-tuple of open intervals $(a_1, b_1),\, \dots,\, (a_n, b_n)$ with rational endpoints,
the function $g$ associates an open interval $(c, d)$ with rational endpoints such that
\[
\forall x \left(x\in(a_1,b_1)\times \dots \times (a_n,b_n)\implies f(x)\in (c,d)\right)\text{.}
\]
\item
$\forall M>0 \,\, \forall \varepsilon >0 \,\, \exists \delta >0 \,\, \forall a_i,b_i$ it holds that, if $c,d$ is the output of $g$ on input $a_i,b_i$, then 
\[
\bigwedge_{i\in [n]}(a_i,b_i) \subseteq [-M,M]\land   \lvert a_i-b_i\rvert <\delta\implies \lvert c-d\vert <\varepsilon.
\]
\end{itemize}
In this case, it is said that $g$ \emph{computes} the function $f$. In addition,
the set of all effectively continuous functions is denoted by $\ECF$.
\end{definition}




As noted in \cite{Servi08}, there is at most one effectively continuous function for each computable function. Thus the set $\ECF$ is countable, and not every constant function is
effectively continuous. On the other hand, rational constant functions,
the identity function $x \mapsto x$, the absolute value function $x \mapsto |x|$, together with addition, multiplication, division, and the exponential function
and basic trigonometric functions are effectively continuous. Using these functions and the following lemma,
all rational functions and
many neural activation functions, such as the ReLU function and
the sigmoid function $\sigma \colon x \mapsto 1 / (1 + \exp(-x))$, can be seen to be in $\ECF$.


\begin{lemma}[{\cite[Lemma 4.2.6]{Servi08}}]\label{comp}
 The set $\ECF$ of all effectively continuous functions is effectively closed under composition, i.e., given computable functions $g$ and $g'$ which compute two effectively continuous functions $f$ and $f'$, respectively, we can effectively find a computable function $g''$ which computes the composition $f \circ f'$ (when the latter is defined).

\end{lemma}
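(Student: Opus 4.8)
\emph{Proof plan.} My plan is to build $g''$ as the obvious ``pipeline'' of $g$ after $g'$ and then verify the two clauses of the definition of effective continuity. Assume $f$ has arity $n$; it is convenient (and no real loss) to read $f'$ as a tuple $(f'_1,\dots,f'_n)$ of effectively continuous functions of a common arity $m$, with computers $g'_1,\dots,g'_n$ obtained from $g'$, so that $(f\circ f')(x)=f(f'_1(x),\dots,f'_n(x))$ is $m$-ary and the lemma's single-inner-function formulation is the case $n=1$. On an input $m$-tuple $I$ of rational open intervals, $g''$ first runs each $g'_j$ on $I$ to get a rational open interval $J_j$ enclosing $f'_j$ on $I$, then runs $g$ on the box $(J_1,\dots,J_n)$ to get a rational open interval $(c,d)$, and outputs $(c,d)$. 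The first clause (enclosure) is then immediate by chaining: for every $x\in I$ we have $f'_j(x)\in J_j$ for all $j$, whence $f(f'(x))\in(c,d)$.

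The real work is the second clause, uniform shrinking: given $M>0$ and $\varepsilon>0$, I must exhibit $\delta>0$ such that every $I\subseteq[-M,M]^m$ with all side-widths below $\delta$ yields an output interval with $d-c<\varepsilon$. The naive ``$I$ small $\Rightarrow$ the $J_j$ small $\Rightarrow$ $(c,d)$ small'' fails as stated, because the shrinking property of $g$ may only be invoked once the intermediate box $(J_1,\dots,J_n)$ is known to sit inside a \emph{fixed} compact box $[-M'',M'']^n$. So I would proceed in the following order. (i) Fix a rational $M_0>M$ and feed each $g'_j$ the fixed box $\bigl((-M_0,M_0)\bigr)^m$; this yields a rational interval bounding $f'_j$ on $[-M,M]^m$, and enlarging all these bounds by $1$ gives a rational $M''$. (ii) Apply the shrinking property of each $g'_j$ with parameters $M_0$ and $1$ to obtain $\delta^{(1)}_j>0$ forcing $\operatorname{width}(J_j)<1$, which together with (i) forces $J_j\subseteq[-M'',M'']$. (iii) Now that $M''$ is fixed, apply the shrinking property of $g$ with parameters $M''$ and $\varepsilon$ to obtain $\delta''>0$. (iv) Apply the shrinking property of each $g'_j$ once more, with parameters $M_0$ and $\delta''$, to obtain $\delta^{(2)}_j>0$ forcing $\operatorname{width}(J_j)<\delta''$. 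Then $\delta:=\min_j\min\bigl(\delta^{(1)}_j,\delta^{(2)}_j\bigr)$ works: for such an $I$, (ii) gives $(J_1,\dots,J_n)\subseteq[-M'',M'']^n$ and (iv) gives all the $\operatorname{width}(J_j)<\delta''$, so (iii) gives $d-c<\varepsilon$.

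Finally, effectivity is cheap: the index of the pipeline machine $g''$ is produced uniformly from the indices of $g$ and the $g'_j$ and the arities, and that is all the lemma asks --- the modulus $\delta$ in the second clause only has to exist, which is exactly what (i)--(iv) establish; no $\delta$ needs to be computed. If function symbols are read as partial (as allowed in the preliminaries), one restricts throughout to boxes $I$ on which every $f'_j$ is defined and maps into $\dom(f)$ with bounded image, and the same construction goes through verbatim. I expect the main obstacle to be exactly the point flagged above: one cannot literally compose the two shrinking properties, since $g$'s modulus needs an \emph{a priori} compact bound on the intermediate values; the two given ``$\forall M\,\forall\varepsilon\,\exists\delta$'' statements must be instantiated in the careful order (i)--(iv), not naively.
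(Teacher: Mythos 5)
The paper does not prove this lemma at all --- it is imported verbatim from Servi \cite{Servi08}, so there is no internal proof to compare against. Your pipeline construction is correct and is essentially the canonical argument: clause (1) is immediate by chaining, and you rightly identify that clause (2) cannot be obtained by naively composing the two moduli, handling it instead by first pinning the intermediate intervals inside a fixed compact box $[-M'',M'']$ and only then invoking the modulus of $g$. The one step worth spelling out is why $\operatorname{width}(J_j)<1$ together with step (i) forces $J_j\subseteq[-M'',M'']$: the interval $J_j$ contains $f'_j(x)$ for some $x$ in the (nonempty) input box, and that value also lies in the a priori bound $K_j$, so $J_j$ meets $K_j$ and, having width below $1$, stays within the enlargement of $K_j$ by $1$.
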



The following two lemmas state some basic topological properties of effectively continuous functions
that are central in our approach.

\begin{lemma}[{\cite[Remark 4.2.2 (rephrased)]{Servi08}}]\label{lem:ecfcont}
Effectively continuous functions are uniformly continuous
on every compact cube of the form $[-M, M]^n$, where $M > 0$.
In particular, effectively continuous functions are continuous in the Euclidean topology.
\end{lemma}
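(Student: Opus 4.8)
\medskip
\noindent\emph{Proof plan.} The plan is to read uniform continuity directly off the two defining conditions of a computable function $g$ that witnesses the effective continuity of $f$. The only mild subtlety is that the cube $[-M,M]^n$ is closed, whereas $g$ only processes open intervals with rational endpoints; this is circumvented by invoking the definition of $\ECF$ with the slightly enlarged bound $M+1$ in place of $M$.

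Concretely, I would first fix $M>0$ and $\varepsilon>0$ and apply the second condition of the definition of $\ECF$ to the bound $M+1$ and to $\varepsilon$, obtaining some $\delta>0$; shrinking $\delta$ if necessary, I may assume $\delta\le 1$. I then claim that $\delta/2$ is a modulus of uniform continuity of $f$ on $[-M,M]^n$ with respect to the supremum metric. To verify this, take $x,y\in[-M,M]^n$ with $\lvert x_i-y_i\rvert<\delta/2$ for all $i$. For each $i$ I would pick rationals $a_i<\min(x_i,y_i)$ and $b_i>\max(x_i,y_i)$ with $b_i-a_i<\delta$; there is room to do so because $\max(x_i,y_i)-\min(x_i,y_i)=\lvert x_i-y_i\rvert<\delta/2$, and since $x_i,y_i\in[-M,M]$ and $\delta\le 1$ these endpoints can be kept inside $(-(M{+}1),M{+}1)$, so that $(a_i,b_i)\subseteq[-(M{+}1),M{+}1]$.

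Now write $(c,d)$ for the output of $g$ on the tuple $(a_1,b_1),\dots,(a_n,b_n)$. Since $x$ and $y$ both lie in the open box $(a_1,b_1)\times\dots\times(a_n,b_n)$, the first condition of the definition gives $f(x),f(y)\in(c,d)$; and the second condition, now legitimately invoked with the bound $M+1$ (each $(a_i,b_i)\subseteq[-(M{+}1),M{+}1]$ and $\lvert a_i-b_i\rvert<\delta$), gives $\lvert c-d\rvert<\varepsilon$. Hence $\lvert f(x)-f(y)\rvert\le\lvert c-d\rvert<\varepsilon$, as both values lie in an interval of length below $\varepsilon$. This is uniform continuity on $[-M,M]^n$ for the supremum metric, and since $\lVert x-y\rVert_\infty\le\lVert x-y\rVert_2$ the same follows for the Euclidean metric. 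For the final assertion I would note that continuity is a local property and that any point $p\in\mathbb{R}^n$ lies in the interior of the cube $[-M,M]^n$ as soon as $M>\max_i\lvert p_i\rvert$, so continuity of $f$ at $p$ follows from uniform continuity on that cube.

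I do not expect a genuine obstacle here: the argument is essentially just unwinding the definition. The one step that needs any care is enclosing a boundary point of the closed cube inside open rational intervals, and this is exactly what the passage from $M$ to $M+1$ handles.
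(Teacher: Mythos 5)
Your argument is correct: the paper itself gives no proof of this lemma, importing it verbatim from \cite[Remark~4.2.2]{Servi08}, and your direct unwinding of the two clauses of the definition of $\ECF$ is a complete and sound verification --- in particular, the passage from $M$ to $M+1$ properly handles points on the boundary of the closed cube, and the uniformity of $\delta$ over all rational interval tuples in the second clause is exactly what delivers \emph{uniform} continuity rather than mere pointwise continuity. Nothing further is needed.
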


\begin{lemma}[{\cite[Proposition 4.2.10 (rephrased)]{Servi08}}]
\label{strictorderreatoms}
Let $T$ be a set of effectively continuous functions. Then the set 
\[ \left\{(g,\bar{x} ): \ \bar{x} \in \mathbb{Q}^n,\ g \in T \textrm{ and }g(\bar{x})<0\right\}\]
is recursively enumerable relatively to $T$, i.e., there is an algorithm which, on input $\bar{x} \in \mathbb{Q}^n$, and a code for $g\in T$, stops if and only if $g(\bar{x})<0$.
\end{lemma}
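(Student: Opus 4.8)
The plan is to exhibit a semi-decision procedure which, on input a rational point $\bar x = (x_1,\dots,x_n) \in \mathbb{Q}^n$ together with a code for a computable function $g$ that computes some $f \in T$, halts precisely when $g(\bar x) < 0$ (here I follow the paper's convention of conflating $g$ with the function it computes). The idea is to trap the value $g(\bar x)$ inside rational intervals that shrink in width, and to stop as soon as the right endpoint of such an interval turns out to be negative.

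Concretely, first compute $M := 1 + \max_{i} \lvert x_i \rvert$, so that each closed interval $[x_i - 1, x_i + 1]$, and hence each interval of the form $(x_i - r, x_i + r)$ with $0 < r \le 1$, is contained in $[-M,M]$. Then, for $k = 1, 2, 3, \dots$, perform stage $k$: form the open intervals $I_i^k := (x_i - \tfrac{1}{2k},\, x_i + \tfrac{1}{2k})$, which have rational endpoints, width $1/k$, and contain $x_i$; apply $g$ to the tuple $(I_1^k, \dots, I_n^k)$ to obtain an interval $(c_k, d_k)$ with rational endpoints; if $d_k < 0$, halt, and otherwise continue to stage $k + 1$. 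All of this is effective given the code for $g$, so the procedure witnesses recursive enumerability relative to $T$.

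For correctness, note first that by the first clause of the definition of effective continuity we have $g(\bar x) \in (c_k, d_k)$ for every $k$, since $\bar x \in I_1^k \times \dots \times I_n^k$; thus $c_k < g(\bar x) < d_k$. If $g(\bar x) \ge 0$ then $d_k > 0$ at every stage, the test never succeeds, and the procedure runs forever -- as required, since in particular $g(\bar x) = 0$ must \emph{not} cause a halt. Conversely, suppose $g(\bar x) < 0$ and put $\varepsilon := -g(\bar x)/2 > 0$. Applying the second clause of the definition with this $\varepsilon$ and the above $M$ yields a $\delta > 0$ such that, whenever the input intervals all lie in $[-M,M]$ and have width below $\delta$, the output interval $(c,d)$ has width $d - c < \varepsilon$. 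For any $k$ with $1/k < \delta$, the intervals $I_i^k$ meet both conditions, so $d_k - c_k < \varepsilon$ and hence $d_k < c_k + \varepsilon < g(\bar x) + \varepsilon = g(\bar x)/2 < 0$; the procedure therefore halts at some such stage.

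The only subtlety worth flagging is that the second clause of effective continuity is a uniformity statement valid only over a fixed compact cube $[-M,M]^n$, so the shrinking probe intervals must be kept inside one such cube throughout the run; fixing $M$ at the outset from the (fixed, rational) input $\bar x$ takes care of this. A minor point in the same vein is that the halting test inspects only the upper endpoint $d_k$, which is exactly what makes the procedure loop forever in the boundary case $g(\bar x) = 0$. Beyond these bookkeeping issues the argument needs no analysis not already packaged in the definition of $\ECF$ and in Lemma~\ref{lem:ecfcont}, so there is no serious obstacle.
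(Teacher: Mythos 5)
Your argument is correct and is the standard one: the paper itself does not prove this lemma but imports it verbatim from Servi (Proposition 4.2.10), and your shrinking-interval procedure is exactly the argument that source uses, with all the relevant subtleties (fixing the compact cube $[-M,M]^n$ up front, not needing to compute $\delta$ but only to eventually fall below it, and the open upper endpoint ensuring non-termination when $g(\bar x)=0$) handled properly. Nothing to add.
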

This lemma implies that the set of rational solutions $\bar{x} \in \Q ^ n$ for $f(\bar{x}) < 0$
is recursively enumerable when $f$ is effectively continuous.



We utilise some rudimentary notions and knowledge from
point-set topology summarised in the following lemma (for basics of point-set topology see, e.g., the monograph \cite{willard2004general}).
In order to simplify the notation, for a topological space $X$, we use $X$ to denote also the underlying set of the space.
A set $D\subseteq A$ is said to be a \emph{dense subset of $A$} (in some topology) if the set $D$ intersects every non-empty open subset of $A$. A topological space $X$ is \emph{compact} (more specifically, $\omega$-compact) if every countable collection $E$ of
open subsets of $X$ such that $X \subseteq \bigcup E$ has a finite subset $F \subseteq D$ so that the condition $X \subseteq \bigcup F$ holds.
\begin{lemma}[Topology toolbox]\label{lem:topology}
	Let $X$ and $Y$ be topological spaces, $f\colon X\rightarrow Y$ a continuous function, $S$ a set in $X$, $A$ and $B$ open sets in $X$, and $C$ an open set in $Y$. Then the following claims hold.
	\begin{itemize}
		\item The subsets $A\cap B$, $A\cup B$, and $f^{-1}[C]$ are open in $X$.
		\item The product $A^k$ is open in the product space $X^k$.
		\item If $D$ is a dense subset of $S$ then $D^k$ is a dense subset of $S^k$ in the corresponding product topology.
        \item If $S^k$ is open in $X^k$, then any projection of $S^k$ is open in the corresponding product space $X^n$.
        \item The product topology of compact spaces is compact.
        \item If $X$ is a compact metric space, then $X$ is sequentially compact, that is, if $(x_i)_{i \in \N}$, is a sequence of elements of $X$, then there is a converging subsequence of $(x_i)_{i \in \N}$.
	\end{itemize}
\end{lemma}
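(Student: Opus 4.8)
The plan is to dispatch the six items one after another, since each is a standard fact of point-set topology; the only thing worth spelling out is the bookkeeping that connects them to the particular product topologies we use later. First I would note that $A\cap B$ and $A\cup B$ are open straight from the axioms for a topology (finiteness of the intersection, arbitrariness of the union are both available), that $f^{-1}[C]$ is open by the very definition of continuity of $f$, and that $A^k=A\times\dots\times A$ is a basic open box of the product topology on $X^k$ and hence open. No difficulty is expected in this first batch.

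For the density claim I would invoke the fact that the boxes $U_1\times\dots\times U_k$, with each $U_i$ open in $S$, form a basis of the product topology on $S^k$; to prove that $D^k$ is dense in $S^k$ it therefore suffices to meet every nonempty such box, and choosing $d_i\in D\cap U_i$ (nonempty because $D$ is dense in $S$) places $(d_1,\dots,d_k)$ in $D^k$ inside the box. For the projection claim I would show that any coordinate projection $\pi\colon X^k\to X^n$ is an \emph{open} map: it sends a basic box to the box on the retained coordinates, hence, writing an arbitrary open set as a union of basic boxes, it sends open sets to open sets; applying this to the open set $S^k$ gives the statement. The compactness of finite products I would obtain from Tychonoff's theorem, or — since only finite powers actually occur — by an easy induction reducing to a product of two compact spaces and using the tube lemma.

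The most substantive item is the last one. Here I would take a sequence $(x_i)_{i\in\mathbb N}$ in a compact metric space $X$, first handle the case in which some value recurs for infinitely many indices (the corresponding constant subsequence converges), and otherwise argue by contradiction: if no subsequence converged, then every point $y\in X$ would have an open neighbourhood containing $x_i$ for only finitely many $i$ — for otherwise, intersecting with the balls of radius $1/m$ around $y$ would produce a subsequence converging to $y$ — these neighbourhoods would form an open cover of $X$, a finite subcover would account for only finitely many indices, contradicting that infinitely many distinct terms occur. The main obstacle, and really the only place that needs any care, is precisely this interplay with repeated terms in the sequence; everything else becomes routine once one consistently reduces assertions about the product topology to assertions about basic open boxes.
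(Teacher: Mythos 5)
Your proposal is correct: every item is established by the standard argument, and the paper itself offers no proof of this lemma at all, simply citing a point-set topology monograph (Willard) for these facts. The only remark worth making is that your case split on infinitely repeated values in the sequential-compactness item is unnecessary --- the cluster-point argument you give afterwards already covers it, since one only needs to pick indices $i_1 < i_2 < \dots$ with $x_{i_m}$ in the ball of radius $1/m$, which is possible whenever every such ball contains terms of the sequence for infinitely many indices.
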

The following lemma is an easy consequence of the previous topological facts.
\begin{lemma} \label{openformula}
Let $\A$ be a model with a vocabulary $\tau$, and $\T$ a topology of the
universe $A$ of $\A$ such that
\begin{itemize}
\item for every $n$-ary relation symbol $R$ in $\tau$, the interpretation
$R^\A$ is an open set in the corresponding product topology of
$A^n$ obtained from $\T$, and
\item for every $n$-ary function symbol $f$ in $\tau$, the interpretation
$f^\A$ is continuous with respect to the product topology of $A^n$ obtained
from $\T$.
\end{itemize}
Let $\phi(\bar{x})\in \EFO(\tau)$
be a formula
with free variables in the $n$-tuple $\bar{x}$.
Then the set of $n$-tuples $\bar{a}$ satisfying
$\phi(\bar{a})$ in $\A$ is an open set in the product topology of $A^n$ based
on $\T$.

%

\begin{proof}
The proof proceeds by induction; we utilise topological facts from Lemma \ref{lem:topology}. First note that the interpretations of terms are continuous functions, since by induction they can be computed by composing continuous functions.

The interpretations of atomic formulae are open, for pre-images of open sets of continuous functions are open.
Finally, if the interpretation of $\phi$ and $\psi$ are open, the interpretations of $\phi\land \psi$ and $\phi\lor \psi$ are open, for the intersections and unions of open sets are open, and the interpretation of $\exists x \phi$ is open, for projections of open sets are open.
%
\end{proof}

\end{lemma}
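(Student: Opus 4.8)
The plan is to prove the statement by structural induction on the existential positive formula $\phi$, drawing on the point-set facts collected in Lemma~\ref{lem:topology}. Before the induction on formulae I would first record a sub-induction on $\tau$-terms: every term $t(\bar x)$ with variables among $\bar x$ denotes a continuous map $t^\A\colon A^n\to A$ in the product topology built from $\T$. Indeed, a variable denotes a coordinate projection, which is continuous; a constant symbol (a nullary function) denotes a constant map, which is continuous; and if $t=f(t_1,\dots,t_k)$, then $t^\A$ is the composite of the tuple map $\bar a\mapsto(t_1^\A(\bar a),\dots,t_k^\A(\bar a))$, which is continuous into $A^k$ precisely because each of its components is continuous (the universal property of the product topology), with $f^\A$, which is continuous by hypothesis; composites of continuous maps are continuous. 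I would here work under the assumption that the function symbols in $\tau$ are interpreted by \emph{total} functions, which covers all activation functions of interest; to accommodate genuinely partial interpretations one would in addition have to require their domains to be open, or the conclusion can fail.

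For the base case, an atomic $\tau$-formula has the shape $R(t_1,\dots,t_k)$ with $R\in\tau$ (in particular $<$ is treated as such an $R$, with the open interpretation $\{(a,b):a<b\}$). The set of $\bar a\in A^n$ satisfying it equals $h^{-1}[R^\A]$ for the continuous tuple map $h(\bar a)=(t_1^\A(\bar a),\dots,t_k^\A(\bar a))$ built above; since $R^\A$ is open by hypothesis and preimages of open sets under continuous maps are open, this set is open. For the inductive step: the satisfying sets of $\phi\land\psi$ and $\phi\lor\psi$ are, respectively, the intersection and the union of the satisfying sets of $\phi$ and $\psi$, which are open by the induction hypothesis, and finite intersections and finite unions of open sets are open; and the satisfying set of $\exists x\,\phi(\bar x,x)$ is the image of the (open, by induction hypothesis) satisfying set of $\phi$ in $A^{n+1}$ under the projection onto the first $n$ coordinates, which is open since projections of open sets are open (Lemma~\ref{lem:topology}). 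Because $\EFO(\tau)$ contains neither universal quantification nor negation, these cases exhaust the grammar, so the induction is complete.

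I do not expect a genuine obstacle: this is a routine propagation-of-openness argument. The two spots that deserve a moment's attention are (i) not conflating ``each coordinate function is continuous'' with ``the tuple map is continuous'' — one must invoke the product topology explicitly — and (ii) the partiality caveat noted above, since without an openness hypothesis on the domain of a partial $f^\A$ the set of tuples with all terms defined need not be open, and the statement breaks; restricting to total interpretations (as is the case for the neural activation functions and for $<$, $+$, $\times$) sidesteps this.
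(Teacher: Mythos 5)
Your proof follows essentially the same route as the paper's own: a preliminary induction showing that terms denote continuous maps (via composition and the product topology), openness of atomic formulae as preimages of open sets under continuous maps, and closure of openness under $\land$, $\lor$, and $\exists$ via intersections, unions, and projections, exactly as in Lemma~\ref{lem:topology}. Your added caveat about partial interpretations (that the domain of a partial $f^\A$ would need to be open for the conclusion to survive) is a fair point the paper's terser proof does not address, but it does not alter the argument for the total interpretations actually used, so the proposal is correct and matches the paper's approach.
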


\begin{lemma} \label{denseformula}
Let $\A$, $\tau$ and $\T$ be as in the previous lemma. Let $D$ be a dense subset of the universe
$A$ in the topology $\T$, and let $\phi(\bar{x})\in\EFO(\tau)$ be a formula with free variables in $\bar{x}$. Then $\phi$ is satisfiable in $\A$ if and only if $\phi(\bar{a})$ is true in $\A$ for some tuple $\bar{a}$ from the dense set $D$.
%
\end{lemma}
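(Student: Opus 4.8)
The plan is to read this off almost immediately from Lemma~\ref{openformula} together with the density facts packaged in Lemma~\ref{lem:topology}. The right-to-left direction is trivial: if $\phi(\bar a)$ holds in $\A$ for some tuple $\bar a$ from $D$, then $\phi$ is in particular satisfiable in $\A$. So the only content lies in the left-to-right implication, and the whole point is to upgrade an arbitrary satisfying assignment to one drawn from $D$.

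Concretely, suppose $\phi$ is satisfiable in $\A$, and let $\bar x=(x_1,\dots,x_n)$ be an $n$-tuple of variables containing all free variables of $\phi$. Let $U\subseteq A^n$ be the set of $n$-tuples $\bar a$ with $\A\models\phi(\bar a)$. By Lemma~\ref{openformula} (whose hypotheses on $\A$, $\tau$, $\T$ are exactly what we have assumed ``as in the previous lemma''), the set $U$ is open in the product topology on $A^n$ induced by $\T$, and it is non-empty since $\phi$ is satisfiable (any satisfying assignment yields a tuple in $U$). By the third item of Lemma~\ref{lem:topology}, $D^n$ is a dense subset of $A^n$ in that same product topology. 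Density means that $D^n$ meets every non-empty open subset of $A^n$; applied to $U$, this produces a tuple $\bar a\in D^n\cap U$, i.e.\ a tuple $\bar a$ from $D$ with $\phi(\bar a)$ true in $\A$, as required.

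I do not expect a genuine obstacle here; the only care needed is bookkeeping. One should note that the argument is uniform in $n\ge 0$, so that the case where $\phi$ is a sentence is covered: then $A^0$ and $D^0$ are singletons, $U$ is either empty or all of $A^0$, and the statement degenerates to the tautology ``$\phi$ is true iff $\phi$ is true''. It is also worth remarking that the equivalence relies on $\EFO$ being negation-free (so that Lemma~\ref{openformula} applies and $U$ is genuinely open); this is consistent with the convention fixed earlier in the excerpt. If one prefers, the conclusion can be restated in terms of variable assignments rather than tuples, but the tuple formulation inherited from Lemma~\ref{openformula} is the convenient one and needs no further adjustment.
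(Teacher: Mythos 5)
Your proposal is correct and matches the paper's own proof essentially verbatim: both directions are handled the same way, with the left-to-right implication obtained by combining the openness of the truth set from Lemma~\ref{openformula} with the density of $D^n$ in $A^n$ from Lemma~\ref{lem:topology}. The extra remarks on the $n=0$ case and negation-freeness are harmless bookkeeping that the paper leaves implicit.
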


\begin{proof}
The $\Leftarrow$ direction is immediate. For the converse, let $\phi$ be an $\EFO(\tau)$ formula with $k$ free variables that is satisfied in $\A$. By Lemma \ref{openformula}, $\phi$ defines an open set in $A^k$. Note that, since $D$ is a dense subset of $A$ in the topology $\T$, $D^k$ is a dense subset of $A^k$ in the corresponding product topology. Now by the definition of denseness, the intersection of $D^k$ and the set defined by $\phi$ is non-empty, and thus $\phi(\bar{a})$ is true in $\A$ for some tuple $\bar{a}\in D^k$.
%
%
%
%
%
\end{proof}

By the density of rational numbers, we can now obtain an upper bound for the satisfiability problem
of formulae when function symbols are interpreted using effectively continuous functions.
First, if the equality sign is not allowed in the formulae, the remaining strict order relation
corresponds to open sets in the Euclidean topology.

\begin{theorem} \label{strictformula}
Let  $\phi(\bar{x})$ be an existential formula  without identity over the structure $\RE_{\ECF}$. 
Then the set of rational solutions $\bar{a} \in \Q^n$
for $\RE_{\ECF} \models \phi(\bar{a})$ is recursively enumerable.
Moreover, the complexity class $\EREs_{\ECF}$ is contained in $\Sigma^0_1$.
%
\end{theorem}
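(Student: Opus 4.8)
The plan is to assemble a single recursive enumeration procedure out of three facts already in hand: the effective closure of $\ECF$ under composition (Lemma~\ref{comp}), the recursive enumerability of the rational solutions of a strict inequality $g(\bar x)<0$ for an effectively continuous $g$ (Lemma~\ref{strictorderreatoms}), and the fact that an identity-free existential formula over such a structure defines an open set which, being open, contains rational points whenever it is nonempty (Lemmas~\ref{lem:ecfcont}, \ref{openformula} and~\ref{denseformula}). The last fact is the crucial one: it makes the existential quantifiers searchable, because witnesses may be sought among the rationals only.

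First I would normalise $\phi$. Moving the existential quantifiers to the front gives $\phi(\bar x)\equiv\exists\bar y\,\psi(\bar x,\bar y)$ with $\psi$ quantifier-free, and putting $\psi$ in disjunctive normal form gives $\psi\equiv\bigvee_{j}\bigwedge_{k}\alpha_{jk}$. Since identity is disallowed, each atom $\alpha_{jk}$ has the form $t<t'$ for terms $t,t'$. The function denoted by a term is a composition of the projections, the rational constants $0,1$, addition, multiplication, and the symbols of $\tau$, all of which are effectively continuous; hence, applying Lemma~\ref{comp} to the codes that accompany the (finitely many) function symbols occurring in $\phi$, one computes from $\phi$ a code for an effectively continuous, possibly partial, function $h_{jk}$ with $\alpha_{jk}$ equivalent to $h_{jk}<0$ (take $h_{jk}$ to compute $t-t'$, which is effectively continuous since subtraction is). Thus $\RE_{\ECF}\models\phi(\bar a)$ if and only if $\bigvee_{j}\exists\bar y\bigwedge_{k}h_{jk}(\bar a,\bar y)<0$.

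Next I would pass from real witnesses to rational ones. Each $h_{jk}$ is continuous by Lemma~\ref{lem:ecfcont}, so for a fixed $\bar a$ the set $\{\bar b : \bigwedge_k h_{jk}(\bar a,\bar b)<0\}$ is open in $\RE^m$ (a finite intersection of preimages of $(-\infty,0)$ under continuous maps, in the spirit of Lemma~\ref{openformula}); being open, if it is nonempty it meets $\Q^m$ by density of the rationals, exactly as in Lemma~\ref{denseformula}. Hence a rational $\bar a$ satisfies $\phi$ in $\RE_{\ECF}$ if and only if there are an index $j$ and a rational tuple $\bar b\in\Q^m$ with $h_{jk}(\bar a,\bar b)<0$ for every $k$. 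This gives the algorithm: on input $\bar a\in\Q^n$, dovetail over all pairs $(j,\bar b)$ with $\bar b\in\Q^m$, and for each such pair run in parallel the finitely many semidecision procedures of Lemma~\ref{strictorderreatoms} testing $h_{jk}(\bar a,\bar b)<0$; halt and accept as soon as, for some pair, all of those runs have halted. By the equivalence just stated this halts precisely when $\RE_{\ECF}\models\phi(\bar a)$, which is the first claim. The convention that an atom with an undefined subterm is false is respected automatically: if $h_{jk}(\bar a,\bar b)$ is undefined, the corresponding run never halts.

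For the second statement, a sentence is the case $n=0$: the set of rational solutions of a closed identity-free existential $\phi$ is the one-element set consisting of the empty tuple if $\RE_{\ECF}\models\phi$, and empty otherwise; so the first claim says exactly that the set of true closed identity-free existential formulas over $\RE_{\ECF}$ (each presented together with codes for its function symbols) is recursively enumerable, i.e.\ in $\Sigma^0_1$. If $P\in\EREs_{\ECF}$ via a polynomial-time many-one reduction $x\mapsto f(x)$ to such formulas, then $x\in P$ iff $f(x)$ lies in that recursively enumerable set, and the composition of a computable function with a recursively enumerable predicate is again recursively enumerable, so $P\in\Sigma^0_1$. I expect the only delicate point to be the bookkeeping of the second paragraph: checking that after forming $t-t'$ every atom genuinely has the single-function shape to which Lemma~\ref{strictorderreatoms} applies, and that the composition of Lemma~\ref{comp} is carried out effectively from the codes attached to the function symbols of $\phi$. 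The DNF normalisation, the dovetailing, and the real-to-rational passage are then routine given the cited lemmas.
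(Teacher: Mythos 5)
Your proof is correct and follows essentially the same route as the paper's: reduce each strict atom to $h(\bar{x})<0$ for an effectively continuous $h$ via Lemma~\ref{comp}, semidecide atoms by Lemma~\ref{strictorderreatoms}, and use openness of the truth set (Lemma~\ref{openformula}) together with density of $\Q^n$ to restrict the existential witnesses to rationals. The only difference is presentational: the paper argues by structural induction using closure of recursively enumerable sets under finite unions, intersections and projections, whereas you first pass to prenex disjunctive normal form and then describe an explicit dovetailing enumeration; these are the same argument in different packaging.
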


\begin{proof}
The proof proceeds by induction of the structure of  $\EFO(\{<\}\cup \ECF)$-formulae.
First note that each atomic formula can be equivalently expressed by an inequality of the form $f(\bar{y}) < 0$ for some effectively
continuous function $f$ and tuple of variables $\bar{y}$. Thus, by Lemma \ref{strictorderreatoms}, rational solution sets for atomic formulae are recursively
enumerable. The cases for conjunction and disjunction are immediate, for finite intersections and unions of recursively enumerable sets are recursively enumerable.

Finally, consider the case for $\exists y \psi(\bar{x},y)$. By the induction hypothesis, the rational truth set of $\psi(\bar{x},y)$ is recursively enumerable; we claim that its projection $P$ to the variables in $\bar{x}$ is the recursively enumerable rational truth set of $\exists y \psi(\bar{x},y)$.
Note first that projections of recursively enumerable sets are recursively enumerable, 
and thus it suffices to show that $P$ is the rational truth set of $\exists y \psi(\bar{x},y)$. Assume that $\phi(\bar{a},b)$, where $\bar{a}$ is a tuple of rational numbers and $b$ is a real number, is true in $\REs_{\ECF}$. Now since, by Lemma \ref{openformula}, the truth set of $\phi(\bar{x},y)$ is open, there exists a rational number $c$ such that  $\phi(\bar{a},c)$ is true in $\REs_{\ECF}$, from which the claim follows.

Now $\EREs_{\ECF}$ is in $\Sigma^0_1$, since by Lemma \ref{denseformula}, non-emptiness of truth sets of $\EFO(\{<\}\cup \ECF)$-formulae are equivalent to checking non-emptiness of their rational truth sets, and $\Sigma^0_1$ is clearly closed under polynomial-time reductions.
\end{proof}
In the following, a \emph{$\tau$-polynomial} is a $\tau$-term where, except for
the arithmetic operations $+$ and $\times$, functions $g$ can only appear with variable or constant arguments, that is, in the form $g(x)$ or $g(c)$. Furthermore, the (total) degree of a polynomial $P$ is defined as the maximum over the sum of the variable exponents in each monomial term occurring in $P$. 
Note that the minus sign can be avoided by shuffling monomial terms
from one side of the equations and inequations to the other.
\begin{lemma}\label{lem:nash}
Let $\tau$ be a set of function symbols and $\phi(\bar{x})\in \EFO(\{+,\times,<,=,0,1\}\cup \tau)$. Then we can construct in polynomial time with respect to the length of the formula $\phi(\bar{x})$ a $\tau$-polynomial $P(\bar{x},\bar{y})$ of degree at most $4$
such that $\phi(\bar{x})$  is equivalent to 
\[ \exists \bar{y}P(\bar{x},\bar{y})=0 \]
over the expanded reals $\RE_{\tau}$.
\end{lemma}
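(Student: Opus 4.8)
The plan is to reduce an arbitrary existential formula to a single $\tau$-polynomial equation of degree at most $4$ by a sequence of standard normalisation steps, carefully controlling the degree at every stage. First I would put $\phi(\bar x)$ into prenex form, pulling all existential quantifiers to the front (this is harmless in the existential fragment and costs only linear time); it then suffices to work on the quantifier-free matrix. Next I would eliminate the Boolean structure: a disjunction $\alpha \lor \beta$ of two equations $P=0$ and $Q=0$ becomes $P\cdot Q = 0$, and a conjunction $\alpha\land\beta$ of $P=0$ and $Q=0$ becomes $P^2+Q^2=0$ (using that we are over the reals). Strict inequalities $s<t$ are replaced by $\exists z\,\big((t-s)\cdot z^2 - 1 = 0\big)$, i.e.\ $(t-s)z^2 = 1$, introducing a fresh witness variable $z$; a (non-strict) inequality $s\le t$, if present, becomes $\exists z\,(t-s-z^2=0)$. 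Equalities $s=t$ become $s-t=0$ directly. After these rewrites every atom is an equation between $\tau$-terms, and the matrix is a single equation $P(\bar x,\bar y)=0$ where $\bar y$ collects the introduced witnesses; the minus signs are absorbed by moving monomials across as the lemma text notes.

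The subtlety is that the resulting $P$ need not be a $\tau$-polynomial (its function symbols may be applied to compound terms) and its degree may be large. To fix both at once I would apply the usual Tseitin/flattening trick already familiar from the $\ETR$ literature: traverse the term DAG and, for every subterm of the form $u+v$, $u\cdot v$, or $g(u)$ where $u,v$ are not already variables or constants, introduce a fresh variable $w$ together with a defining equation $w = u+v$, $w = u\cdot v$, or $w = g(u)$; then replace the subterm by $w$ everywhere. Iterating bottom-up, every defining equation becomes one of $w - w' - w'' = 0$, $w - w'w'' = 0$, or $w - g(w') = 0$ with $w',w''$ atomic, so each is a $\tau$-polynomial of degree at most $2$, and $g$ is only ever applied to a single variable or constant — exactly the $\tau$-polynomial shape required. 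This introduces polynomially many new existential variables and equations, and is clearly computable in polynomial time in $|\phi|$. Finally, combine all these degree-$\le 2$ equations $P_1=0,\dots,P_m=0$ into one by $\sum_i P_i^2 = 0$; squaring a degree-$\le 2$ polynomial gives degree $\le 4$, which is the claimed bound. Equivalence over $\RE_\tau$ holds because each rewrite step preserves satisfiability in the real structure (the sum-of-squares and $z^2$ encodings rely on the ordering of $\RE$, and partiality of the $g$'s is handled by the convention that an atom with an undefined term is false, matching the $w=g(w')$ constraint being unsatisfiable exactly when $g(w')$ is undefined).

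I expect the main obstacle to be bookkeeping rather than conceptual: one must verify that the flattening terminates with the function symbols appearing only in the permitted form $g(x)$ or $g(c)$ — in particular that nested applications $g(h(u))$ are correctly broken by first introducing $w' = h(u)$ and then $w = g(w')$ — and that the degree never exceeds $4$ after the final sum-of-squares step, which forces the intermediate equations to have degree at most $2$ (hence the need to flatten products $u\cdot v$ into binary multiplications $w = w'w''$ rather than leaving longer monomials). One should also double-check the strict-inequality encoding: $(t-s)z^2 = 1$ is satisfiable in reals iff $t-s > 0$, and after flattening $t-s$ to a variable $u$ this becomes the degree-$3$ equation $u z^2 - 1 = 0$, which must itself be flattened (e.g.\ $w = z^2$, then $uw - 1 = 0$ of degree $2$) to stay within the budget before squaring. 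Once these details are in place, polynomial-time computability and the degree-$4$ bound follow immediately, and the correctness of the construction over $\RE_\tau$ is a routine induction on the rewrite steps.
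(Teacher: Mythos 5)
Your proposal is correct and follows essentially the same route as the paper: the paper's proof simply flattens each functional subterm $g(t)$ into $g(y)$ with a new conjunct $y=t$ and then invokes the standard $\ETR$ normal-form construction of Schaefer and \v{S}tefankovi\v{c} (their Lemma~3.2), which is exactly the prenexing, product/sum-of-squares Boolean elimination, $z^2(t-s)=1$ encoding of strict inequalities, and degree-reducing flattening that you spell out, ending with a sum of squares of degree-$\le 2$ equations to get total degree $\le 4$. The only caveat is your parenthetical on partial functions: collapsing a disjunction into a single equation does not in general respect the paper's convention that an atom with an undefined term is false (a disjunction can be true via the other disjunct while your combined equation forces every $w=g(w')$ constraint to be defined), but this is immaterial for the lemma's actual use, where the functions in $\tau$ are total.
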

\begin{proof} The proof of this lemma is analogous to the proof of \cite[Lemma 3.2]{SchaeferS17}. In fact, it suffices to first simplify all subterms of $\phi$ of the form $g(t)$ to the required form using new variables: replace $g(t)$ by $g(y)$ and add a new conjunct $y=t$. Then the translation in \cite{SchaeferS17} can be used directly by treating terms $g(y)$ exactly the same way as in the original proof.
\end{proof}

Analogously to the $\ex$-complete decision problem $\FEASFOUR$ in \cite{SchaeferS17},
we define $\FEASFOURtau$ to be the decision problem that asks, given a $\tau$-polynomial of degree at most $4$, if the polynomial is feasible, i.e., has
a root in the extended model $\RE_\tau$. Using the previous lemma, we obtain a complete problem for
the corresponding extension of $\ex$ with the functions in $\tau$,
thus justifying the use of the notation $\ex_\tau$.

\begin{corollary}
$\FEASFOURtau$ is $\ex_\tau$-complete.
\end{corollary}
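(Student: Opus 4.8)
The plan is to obtain the corollary directly from Lemma~\ref{lem:nash}, mirroring the way $\FEASFOUR$-completeness of $\ETR$ is obtained from \cite[Lemma 3.2]{SchaeferS17}. Two directions have to be checked: membership of $\FEASFOURtau$ in $\ex_\tau$, and $\ex_\tau$-hardness of $\FEASFOURtau$.

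For membership, I would simply observe that an instance $P$ of $\FEASFOURtau$ — a $\tau$-polynomial of degree at most $4$ — is feasible in $\RE_\tau$ if and only if the closed formula $\exists\bar{x}\,(P(\bar{x})=0)$ of existential real arithmetic with functions from $\tau$ is true in $\RE_\tau$. The map $P\mapsto\exists\bar{x}\,(P(\bar{x})=0)$ is computable in polynomial time, so it is a polynomial-time many-one reduction of $\FEASFOURtau$ to the existential theory of $\RE_\tau$, which by definition witnesses $\FEASFOURtau\in\ex_\tau$.

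For hardness, let $L\in\ex_\tau$ be arbitrary. By definition there is a polynomial-time computable function $g$ such that for every input $w$ the value $g(w)$ is a closed $\EFO(\{+,\times,<,=,0,1\}\cup\tau)$-formula with $w\in L$ iff $\RE_\tau\models g(w)$. Applying Lemma~\ref{lem:nash} to $g(w)$ — a sentence, i.e.\ the special case where the free-variable tuple is empty — yields a $\tau$-polynomial $P_w(\bar{y})$ of degree at most $4$, computable in polynomial time from $g(w)$, such that $\RE_\tau\models g(w)$ iff $\RE_\tau\models\exists\bar{y}\,(P_w(\bar{y})=0)$, i.e.\ iff $P_w$ is feasible. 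Composing, $w\mapsto P_w$ is computable in polynomial time and $w\in L$ iff $P_w\in\FEASFOURtau$, so $L$ reduces to $\FEASFOURtau$; since $L$ was arbitrary, $\FEASFOURtau$ is $\ex_\tau$-hard, and with the previous paragraph this gives $\ex_\tau$-completeness.

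The argument is routine once Lemma~\ref{lem:nash} is available; the only points requiring a moment's attention are that the term- and quantifier-normalisation of Lemma~\ref{lem:nash} preserves truth over $\RE_\tau$ even in the presence of partial function interpretations (inherited from the statement of that lemma), and that polynomial-time many-one reductions compose, so that chaining the reduction defining membership in $\ex_\tau$ with the one supplied by Lemma~\ref{lem:nash} stays within polynomial time. I do not expect either point to pose a genuine obstacle.
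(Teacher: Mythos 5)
Your proposal is correct and matches the paper's intent exactly: the paper gives no explicit proof, presenting the corollary as an immediate consequence of Lemma~\ref{lem:nash} ("Using the previous lemma, we obtain a complete problem\dots"), and your two directions — the trivial reduction $P\mapsto\exists\bar{x}\,(P(\bar{x})=0)$ for membership and the composition of the defining reduction with Lemma~\ref{lem:nash} for hardness — are precisely the argument being left implicit.
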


Note that for product spaces of the form $\RE^n$, the standard Euclidean topology coincides with
the product topology and is sequentially compact. With the help of Lemma \ref{lem:nash},
we can now give an upper bound for complexity class of the existential theory of the
reals extended with effectively continuous functions.

\begin{theorem}
\label{EREECFSigma3}
$\ERE_{\ECF}$ is contained in $\Sigma^0_3$.
%
%
\end{theorem}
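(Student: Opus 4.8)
The plan is to use Lemma~\ref{lem:nash} to reduce, in polynomial time, any instance of $\ERE_{\ECF}$ to the feasibility question of a single $\tau$-polynomial $P(\bar x)$ of degree at most $4$ over $\RE_{\ECF}$, so that it suffices to show that deciding whether such a $P$ has a real root lies in $\Sigma^0_3$. I would then express ``$P$ has a root'' through a $\Sigma^0_3$ quantifier prefix over natural numbers: there exist a bounding radius $M \in \N$ and there exists a root inside the compact cube $[-M,M]^n$, and the innermost matrix will be a $\Pi^0_2$ statement asserting the existence of such a root via rational approximations. The key point is that the absolute value function and multiplication are effectively continuous, so by Lemma~\ref{comp} the map $\bar x \mapsto \lvert P(\bar x)\rvert$ is (the unique) effectively continuous function computed by some computable $g$, and by Lemma~\ref{lem:ecfcont} it is uniformly continuous on $[-M,M]^n$.

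Concretely, ``$P$ has a root in $[-M,M]^n$'' is equivalent to: for every $\varepsilon = 1/k$ ($k \in \N$) there is a rational point $\bar q \in \Q^n \cap [-M,M]^n$ with $\lvert P(\bar q)\rvert < 1/k$. The forward direction uses continuity of $P$ together with density of $\Q^n$ (Lemma~\ref{lem:topology}); the backward direction uses sequential compactness of $[-M,M]^n$ (Lemma~\ref{lem:topology}) to extract from a sequence $\bar q_k$ with $\lvert P(\bar q_k)\rvert < 1/k$ a convergent subsequence whose limit, by continuity of $P$, is a genuine root. The quantification ``$\forall k\,\exists \bar q$'' has the form $\forall k\, \exists \bar q\,(\dots)$; the inner part ``$\lvert P(\bar q)\rvert < 1/k$'' is, by Lemma~\ref{strictorderreatoms}, recursively enumerable in $\ECF$ (equivalently, we write $\lvert P(\bar q)\rvert - 1/k < 0$ and invoke the lemma on the corresponding effectively continuous function), hence $\Sigma^0_1$. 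Thus the matrix is $\forall \exists \Sigma^0_1 = \Pi^0_2$, and prefixing the existential over $M \in \N$ yields $\Sigma^0_3$. Since $\ERE_{\ECF}$ reduces in polynomial time (hence computably) to this problem, and $\Sigma^0_3$ is closed under many-one reductions, we conclude $\ERE_{\ECF} \subseteq \Sigma^0_3$.

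The main obstacle is the backward direction of the equivalence, i.e.\ turning the approximate-root condition into an exact root: this is where sequential compactness of the cube $[-M,M]^n$ and continuity of $P$ in the Euclidean topology (Lemmas~\ref{lem:topology} and~\ref{lem:ecfcont}) are essential, and one must be slightly careful that $P$ may be only a \emph{partial} function of its arguments when $\tau$ contains partial functions (e.g.\ division) — here one restricts attention to the domain and notes that composition in Lemma~\ref{comp} is stated ``when the latter is defined'', so the $g$ computing $\lvert P\rvert$ is defined exactly on the relevant region; alternatively one observes that the $\tau$-polynomials arising from Lemma~\ref{lem:nash} for the neural-network applications involve only total activation functions. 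A secondary routine point is checking that the code for the effectively continuous function $\lvert P\rvert$ can be produced from $P$ in an effective (indeed polynomial-time) manner, which is exactly what the effectiveness clauses of Lemma~\ref{comp} provide.
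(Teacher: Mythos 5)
Your proposal is correct and follows essentially the same route as the paper's proof: reduce via Lemma~\ref{lem:nash} to the feasibility of a single $\tau$-polynomial, characterize the existence of a root by an $\exists\,\forall\,\exists$ statement over a compact region using rational approximants, justify the backward direction by sequential compactness and continuity, and place the innermost strict-inequality predicate in $\Sigma^0_1$. The only differences are cosmetic (a cube $[-M,M]^n$ with $M\in\mathbb{N}$ instead of a ball $B(\bar 0,d)$ with rational $d$, and invoking Lemma~\ref{strictorderreatoms} directly rather than through Theorem~\ref{strictformula}).
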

\begin{proof}
%
Let $\phi$ be an existential sentence 
over $\RE_{\ECF}$.
%
By Lemma \ref{lem:nash}, we may assume that $\phi$ is of the 
form  $\exists \bar{x} P_\tau(\bar{x}) = 0$, where $P_\tau$ is a $\tau$-polynomial over a vocabulary $\tau$ that contains function symbols for the effectively continuous functions.
By Lemmas \ref{comp} and \ref{lem:ecfcont}, the function defined by $P_\tau$ is effectively continuous.
%
We observe that $\exists \bar{x} P_\tau(\bar{x}) = 0$ is equivalent over $\RE_{\ECF}$ to the expression
\begin{equation}\label{eq:sigma3formula}
\exists d> 0 \,\, \forall \varepsilon > 0 \,\, \exists \bar{x}\in {B}(\bar{0},d): |P_\tau(\bar{x})|< \varepsilon\text{.}
\end{equation}
For this claim, $\exists \bar{x} P_\tau(\bar{x}) = 0$ clearly implies 
\eqref{eq:sigma3formula}. For the converse direction, \eqref{eq:sigma3formula} entails that there is an infinite sequence of tuples $(\bar{x}_i)_{i \in \N}$ inside some open ball ${B}(\bar{0},d)$ such that $\lim_{i \to \infty} P_\tau(\bar{x}_i) = 0$. Since the closure $\bar{B}(\bar{0},d)$ is compact, the sequence $(\bar{x}_i)_{i \in \N}$ has a subsequence $(\bar{y}_{j})_{j \in \N}$ that converges to a limit point $y\in \bar{B}(\bar{0},d)$. Since $P_\tau$ is continuous, and $\lim_{j \to \infty} P_\tau(\bar{y}_j) = 0$, we obtain that $P_\tau(\bar{y})=0$ and, in particular, that the claim $\exists \bar{x} P_\tau(\bar{x}) = 0$ holds.

The parameters $\varepsilon$ and $d$ can without loss of generality be assumed to be rational in \eqref{eq:sigma3formula}. The innermost 
existentially quantified part of \eqref{eq:sigma3formula} then describes a predicate that is definable by an existential formula with $\varepsilon$ and $d$ as rational constant parameters over $\RE^<_{\ECF}$ and thus is in
 $\Sigma^0_1$ by Theorem \ref{strictformula}. The full expression \eqref{eq:sigma3formula} hence defines a $\Sigma^0_3$ predicate.
Since $\ERE_{\ECF}$ is defined as the class of problems that can be reduced in polynomial time to sentences of the form of $\phi$, we conclude that $\ERE_{\ECF}$ is contained in $\Sigma^0_3$.
\end{proof}

Now the following results are corollaries of the preceding theorems.

\begin{theorem}
$\exexps \subseteq \Sigma^0_1$ and $\exexp \subseteq \Sigma^0_3$.
\end{theorem}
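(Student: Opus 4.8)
The plan is to obtain both inclusions as immediate consequences of Theorems~\ref{strictformula} and~\ref{EREECFSigma3}, using only the fact that the exponential function is effectively continuous. First I would recall the observation made after Lemma~\ref{comp}: the function $x\mapsto\exp(x)$ lies in $\ECF$, as do the rational constants and the arithmetic operations, so the whole signature underlying $\RE_{\{\exp\}}$ is a sub-signature of the one underlying $\RE_{\ECF}$. Hence every existential $\FO(\{+,\times,<,=,0,1,\exp\})$-sentence is in particular an existential sentence over $\RE_{\ECF}$, and every such sentence without identity is an identity-free existential sentence over $\RE_{\ECF}$.

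From this, the class inclusions $\exexp\subseteq\ERE_{\ECF}$ and $\exexps\subseteq\EREs_{\ECF}$ follow directly: any problem that reduces in polynomial time to the existential theory of $\RE_{\{\exp\}}$ reduces, via the very same function, to the existential theory of $\RE_{\ECF}$ — the output formulae merely happen to use no function symbol other than $\exp$ — and likewise in the identity-free case. Applying Theorem~\ref{strictformula} then gives $\exexps\subseteq\EREs_{\ECF}\subseteq\Sigma^0_1$, and applying Theorem~\ref{EREECFSigma3} gives $\exexp\subseteq\ERE_{\ECF}\subseteq\Sigma^0_3$, which is exactly the assertion. In fact one could phrase the entire argument as the single remark that $\exexps$ and $\exexp$ are, by definition, subclasses of $\EREs_{\ECF}$ and $\ERE_{\ECF}$.

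There is essentially no obstacle; the only points deserving a sentence of care are bookkeeping ones. Since $\{\exp\}$ is a finite, hence computable, subset of $\ECF$, the relativised recursive enumerability of Lemma~\ref{strictorderreatoms} — and therefore the statement of Theorem~\ref{strictformula} — specialises to genuine recursive enumerability, so nothing relativised survives in the final bounds. And the passage to a degree-at-most-$4$ $\tau$-polynomial provided by Lemma~\ref{lem:nash}, on which the proof of Theorem~\ref{EREECFSigma3} relies, keeps us inside $\ECF$: by Lemmas~\ref{comp} and~\ref{lem:ecfcont}, any polynomial built from $\exp$ and the arithmetic operations is again effectively continuous. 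So the corollary reduces to citing the two preceding theorems together with $\exp\in\ECF$.
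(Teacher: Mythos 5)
Your proposal is correct and follows exactly the route the paper intends: the theorem is presented as a corollary of Theorems~\ref{strictformula} and~\ref{EREECFSigma3}, obtained by noting that $\exp$ (together with the arithmetic operations) is effectively continuous, so that $\exexps\subseteq\EREs_{\ECF}$ and $\exexp\subseteq\ERE_{\ECF}$. Your additional bookkeeping remarks (that the relativised enumerability specialises and that Lemma~\ref{lem:nash} keeps one inside $\ECF$) are sound but already subsumed by the statements of the two cited theorems.
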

In fact, by the model completeness of the first-order theory of
$\RE_{\exp}$ (see \cite{Wilkie1997, Servi08}), the decidability
of its existential fragment is equivalent to it being recursively enumerable,
and also equivalent to the full
first-order theory of $\RE_{\exp}$ being decidable. Thus, the precise relationship of $\exexp$
to $\Sigma^0_1$ and $\Sigma^0_3$ is an open problem.
These observations can be
generalized to any $\RE_\tau$ with a model-complete first-order theory and
to any set $\tau$ of effectively continuous functions.

Below, for convenience of stating the theorem, we assume below that  $\ex_{\sin}$ and $ \ex^<_{\sin}$ are closed under computable reductions.
\begin{theorem}\label{sincomplete}

 $\Sigma^0_1\subseteq \ex_{\sin} \subseteq \Sigma^0_3$ 
and $ \ex^<_{\sin}=\Sigma^0_1$.
\end{theorem}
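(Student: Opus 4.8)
The plan is to establish three containments. For $\ex^<_{\sin} \subseteq \Sigma^0_1$, I would simply invoke Theorem~\ref{strictformula}: since $\sin$ is effectively continuous (it is listed as one of the basic trigonometric functions in $\ECF$), we have $\EREs_{\{\sin\}} \subseteq \EREs_{\ECF} \subseteq \Sigma^0_1$. Similarly, $\ex_{\sin} \subseteq \Sigma^0_3$ follows from Theorem~\ref{EREECFSigma3} together with $\ex_{\sin} \subseteq \ERE_{\ECF}$. So the only real content is the two hardness directions: $\Sigma^0_1 \subseteq \ex_{\sin}$ and $\Sigma^0_1 \subseteq \ex^<_{\sin}$ (the latter being the nontrivial half of the claimed equality $\ex^<_{\sin} = \Sigma^0_1$, since the upper bound is already in hand).

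The main obstacle, and the heart of the proof, is the hardness direction. The standard approach is to use Richardson's theorem and the encoding of Hilbert's tenth problem via the sine function. Concretely, given a Diophantine equation, one can express the existence of a nonnegative integer solution $\bar{n}$ by asserting the existence of reals $\bar{x}$ together with the constraints $\sin(\pi x_i) = 0$ (forcing each $x_i$ to be an integer) and $x_i \geq 0$; an integer variable ranging over $\N$ then becomes available, and the polynomial equation is written directly. Since the satisfiability of systems of Diophantine equations is $\Sigma^0_1$-complete (this is the MRDP/Matiyasevich theorem), and the reduction is computable, we get $\Sigma^0_1 \subseteq \ex_{\sin}$ using the assumed closure of $\ex_{\sin}$ under computable reductions. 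The delicate point is the strict-inequality version: in $\ex^<_{\sin}$ we cannot write $\sin(\pi x) = 0$ or $x \geq 0$ directly. Here one uses that $\cos(\pi x) = 1$ iff $x$ is an even integer, but even this is an equality. The usual trick (following Richardson and its refinements, e.g.\ as exposed by Schaefer–\v{S}tefankovi\v{c}) is to replace the equality $P(\bar{x}) = 0$ in the target by a single inequality involving an auxiliary ``proximity'' function built from sines: one constructs a function $F(\bar{x})$ that is effectively continuous, is $\leq 0$ somewhere iff the original Diophantine system has a solution, and moreover has the property that whenever it dips to $0$ it does so on an open set (so $F(\bar{x}) < 1$ on a neighborhood, say) — turning the feasibility question into a strict inequality.

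Thus the key steps, in order, are: (1) cite Theorems~\ref{strictformula} and~\ref{EREECFSigma3} and the membership of $\sin$ in $\ECF$ to obtain $\ex^<_{\sin} \subseteq \Sigma^0_1$ and $\ex_{\sin} \subseteq \Sigma^0_3$; (2) recall the MRDP theorem, so that it suffices to reduce solvability of Diophantine systems to $\ex_{\sin}$ (resp.\ $\ex^<_{\sin}$); (3) for $\ex_{\sin}$, carry out the direct reduction using $\sin(\pi x_i) = 0 \land x_i \geq 0$ to simulate quantification over $\N$; (4) for $\ex^<_{\sin}$, invoke the Richardson-style construction of an effectively continuous function $F$, definable by a strict-inequality $\{<\}\cup\{\sin\}$-formula, whose feasibility is equivalent to solvability of the Diophantine system and which can be made open, so $F < 0$ (or $F < 1$ after a shift) captures exactly the solvable instances; (5) conclude using the assumed closure of both classes under computable reductions. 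I expect step~(4) — producing the strict-inequality encoding while staying inside $\EFO(\{<,\sin\})$ and justifying openness — to be the step requiring the most care; the rest is bookkeeping on top of results already in the paper.
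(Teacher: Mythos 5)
Your decomposition is the same as the paper's: both upper bounds come from Theorems \ref{strictformula} and \ref{EREECFSigma3} via $\sin\in\ECF$, and $\Sigma^0_1$-hardness of $\ex_{\sin}$ is obtained exactly as you describe, by using zeros of the sine to define the natural numbers and then invoking MRDP together with the assumed closure under computable reductions. One detail you gloss over: $\pi$ is not a constant of the language, so one must first pin it down; the paper defines $2\pi$ as the unique zero of $\sin$ in $(4,7)$ via $\sin(y)=0\land 4<y\land y<7$ and then writes $\sin(2\pi\times x)=0\land(x=0\lor x>0)$. This is routine once equality is available.

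The genuine gap is in your step (4), the hardness of $\ex^{<}_{\sin}$. The Richardson-style proximity function you invoke has, schematically, the shape $F(\bar x)=A(\bar x)\,P(\bar x)^2+\sum_i \sin^2(\pi x_i)\,B_i(\bar x)-1$, and the classical constructions (Richardson, Wang) all need the factor $\pi$ inside the sines so that the penalty terms vanish exactly at the integer points. In the equality-free fragment $\EFO(\{+,\times,<,0,1,\sin\})$ you cannot define $\pi$ --- the device above uses an equality atom --- and no strict-inequality approximation suffices, since for $p\neq\pi$ the zeros of $\sin(p x)$ drift arbitrarily far from the integers as $x$ grows. Eliminating $\pi$ from the construction is a substantial piece of work in its own right; it is precisely the content of the Laczkovich result that the paper cites for this half of the theorem. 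So ``carry out the usual Richardson trick'' does not close the argument as stated: you either need to cite that result or reproduce its $\pi$-free encoding. Everything else in your plan is correct and matches the paper.
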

\begin{proof}
Note that the formula $\sin(y)=0 \land 4<y \land y<7$ defines $2\pi$ in $y$, and that the formula
\[\sin(2\pi \times x)=0 \land (x = 0 \lor x > 0)\] defines natural numbers over the reals,
with the help of the previously defined $2\pi$. Hence $\Sigma^0_1$-hardness follows from the Davis--Putnam--Robinson--Matiyasevich theorem.
For the second item, hardness follows from the results of  \cite{Laczkovich}.
\end{proof}

\section{The Neural Network Training Problem}

Under a suitable formulation, the complexity of the neural network training problem
is complete for the complexity class $\ERE_\tau$, where the set $\tau$
includes all those unary real-valued functions that are being used for neural activation.
In the case of effectively continuous activation functions,
the complexity of the training problem can be bounded using low levels of the
arithmetical hierarchy.
We concentrate on \emph{feedforward neural networks} under the following definitions.

\begin{definition}
A \emph{neural network architecture}
is a pair $N = (G, \varphi)$
over a finite directed acyclic graph $G = (V, E)$
and a function $\varphi$ with the following definitions and properties:
\begin{itemize}
\item The vertices of the graph are called \emph{neurons}.
A neuron of the network architecture $N$ is called
an \emph{input neuron} if there are no incoming edges to it. A neuron is an \emph{output neuron}
if there are no outgoing edges from it. Every other vertex of the graph
is called a \emph{hidden neuron}.
\item The function $\varphi$ maps each non-input neuron $v$ to a function $\varphi(v) \colon \RE \to \RE$.
The function $\varphi(v)$ is the \emph{activation function} of the neuron $v$.
The value $\varphi(v)$ is alternatively denoted by $\varphi_v$.
\end{itemize}
%
%
%
%
\end{definition}

\begin{definition}
\label{neuralnetwork}
A \emph{neural network} is a triple $\NN = (N, w, b)$ over a neural network
architecture $N$ together with real-valued functions $w$ and $b$ with the following properties:
\begin{itemize}
\item The function $w$ maps each edge $e$ of the underlying graph to a real number. The value
$w(e)$ is the \emph{edge weight} of $e$ in the neural network and is denoted by $w_e$.
\item The function $b$ maps each non-input neuron $v$ to a real number, called the \emph{bias} of $v$,
and denoted in short by $b_v$.
\end{itemize}
\end{definition}

\begin{definition}
A \emph{data point} for a neural network architecture $N$ is a function $d$
that maps a real number to each input neuron and to each output neuron.
%
\end{definition}


\begin{definition}
\label{def:neuralfunction}
Let $\NN = (N, w, b)$ be a neural network. The \emph{neural function} of
$\NN$ is the unique function $g$ with the following definitions and properties:
\begin{itemize}
\item The domain of $g$ is the set of
all possible data points for the network architecture $N$. For each data point $d$,
the value $g(d)$ is a function mapping each neuron to a real number, called the \emph{neural value}
computed by the network for the given neuron.
This function is also referred to as $g_d$.


\item For each data point $d \in \dom(g)$ and for each input neuron $v$ it
holds that $g_d(v) = d(v)$.

\item For each $x \in \dom(g)$ and for each neuron $v \in V$ that is not an input neuron
it holds that
\begin{equation}
\label{eq:neuralactivation}
g_d(v) = \varphi_v\left(b_v + \sum_{u \in P} \left(w_{(u, v)} \times
g_d(u)\right)\right)\text{,}
\end{equation}
where $P$ is the set of predecessors of $v$ in the underlying directed acyclic graph.




\end{itemize}

\end{definition}

In other words, the neural function for a fixed data point $d$ is defined
recursively as follows: For each input neuron $v$,
the value given by the neural function for $v$ is the value $d(v)$ as
given by the data point. For every other neuron $v$, the neural value $g_d(v)$
is computed by adding the bias value $b_v$ to the sum of the incoming neural values,
each multiplied by the corresponding edge weight, and finally this sum is mapped through
the activation function $\varphi_v$.
The neural function is well-defined, since the network architecture
is specifically assumed to be acyclic.

%
%


\newcommand{\NNtauTRAINING}{\operatorname{NN_\tau-TRAINING}}
\newcommand{\NNfTRAINING}{\operatorname{NN_f-TRAINING}}
\newcommand{\NNfsingletonTRAINING}{\operatorname{NN_{\{f\}}-TRAINING}}
\newcommand{\NNsinTRAINING}{\operatorname{NN_{\sin}-TRAINING}}
\newcommand{\NNsigmoidTRAINING}{\operatorname{NN_\sigma-TRAINING}}

In the following definition, to simplify notation,
we denote by $d_O$ the tuple consisting of the values of a data point
$d$ for the output neurons of the network architecture, where some fixed order on the output
neurons independent of the data points is assumed.

\begin{definition}
\label{def:nntraining}
Let $\tau$ be a set of unary functions. An \emph{$\NNtauTRAINING$ instance} is a tuple
of the form $(N, A_E, A_V, D, c, \prec, \delta)$ for a neural
network architecture $N = (G, \varphi)$ with the following properties:
\begin{itemize}
\item For each neuron $v$ of the network architecture $N$,
the activation function $\varphi_v$ is either the identity function $x \mapsto x$
or some function $f \in \tau$.
\item A subset $A_E$ of the edges of $N$ are marked as \emph{active edges} in the
training process. Similarly, $A_V$
is a subset of \emph{active neurons}. 
\item A finite set $D$ of data points that maps neurons to rational numbers.
\item A \emph{cost function} $c \colon \RE ^ {2m} \to \RE$, where $m$
is the number of output neurons of $N$, is assumed to be definable with an arithmetic expression.
\item The symbol $\prec$ is one the of relation symbols in the set $\{=, \leq, <\}$.
\item A \emph{threshold} value $\delta$ for allowed total training error is given as a rational number.
\end{itemize}

A pair $(w, b)$ is a \emph{satisfying solution}
for the training instance $(N, A_E, A_V, D, c, \prec, \delta)$,
if $(N, w, b)$ is a neural network as in Definition \ref{neuralnetwork}
so that for each edge $e$ of $G$ not in $A_E$ it holds that $w(e) = 1$ and
for each neuron $v$ not in $A_V$, $b(v) = 0$, and that the neural function $g$
satisfies the condition
\begin{equation}\label{eq:totalerror}
\sum_{d \in D} c\left((g_d)_O, d_O\right) \prec \delta\text{.}
\end{equation}


Then $\NNtauTRAINING$ is the decision problem that asks, given an $\NNtauTRAINING$ instance,
if there exists some satisfying solution for it. Furthermore, $\NNtauTRAINING^<$ is defined
otherwise as the decision problem $\NNtauTRAINING$, except that the strict order relation $<$
is the only allowed choice for the relation symbol $\prec$.
Furthermore, for a single function $f$, we use the notation $\NNfTRAINING$ for $\NNfsingletonTRAINING$,
and similarly for $\NNfTRAINING^<$.

\end{definition}

If $c$ is such a cost function that for all $m$-tuples $\bar{a}$ and $\bar{b}$
it holds that $c(\bar{a}, \bar{b}) = 0$ if and only if $\bar{a} = \bar{b}$,
then it is called \emph{faithful}. If $c$ is both faithful and non-negative,
threshold value $\delta = 0$ is used, and $\prec$ is replaced with the equality sign,
then \ref{eq:totalerror} is equivalent to
all data points $d \in D$ satisfying the condition $(g_d)_O = d_O$, that is,
$g_d(v) = d(v)$ for every output neuron of the network.

%
%


\begin{lemma}\label{NNEREstau}
For any set $\tau$ of unary real-valued functions, the problem $\NNtauTRAINING^<$ is in
the complexity class $\EREs_{\tau \,\cup\, \{-, \div\}}$.
\end{lemma}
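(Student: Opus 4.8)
The plan is to construct, for a fixed set $\tau$ of unary functions, a polynomial-time many-one reduction from $\NNtauTRAINING^<$ to the satisfiability over $\RE_{\tau\cup\{-,\div\}}$ of closed existential formulae that contain no equality atoms. Given an instance $(N,A_E,A_V,D,c,<,\delta)$, I would first introduce an existential variable $w_e$ for every active edge $e\in A_E$ and $b_v$ for every active neuron $v\in A_V$, reading $w_e$ as the constant $1$ for $e\notin A_E$ and $b_v$ as the constant $0$ for $v\notin A_V$. Since unfolding the recursion of Definition~\ref{def:neuralfunction} into a single term has exponential size for deep graphs, I would additionally introduce, for each data point $d\in D$ and each non-input neuron $v$, an auxiliary variable $z_{d,v}$ denoting the pre-activation of $v$, so that the neural value $g_d(v)$ is represented by the short term $z_{d,v}$ when $\varphi_v$ is the identity and by $f(z_{d,v})$ when $\varphi_v=f\in\tau$. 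The semantics of the network is then captured by the conjunction of the polynomially many local equations
\[
z_{d,v}\;=\;b_v+\sum_{u\in P(v)}w_{(u,v)}\cdot\iota_{d,u},
\]
one per pair $(d,v)$ (here $\iota_{d,u}=d(u)$ for input neurons $u$ and $\iota_{d,u}$ is the short output term of $u$ otherwise), together with the single strict inequality expressing that the total training error \eqref{eq:totalerror} lies below $\delta$ (recall that $c$ is a fixed arithmetic expression, which may use $-$ and $\div$). For any values of the surviving weight and bias variables the $z_{d,v}$ are uniquely determined along a topological order of $G$, so this polynomial-size closed formula is satisfiable over $\RE_{\tau\cup\{-,\div\}}$ exactly when the instance has a satisfying solution; this already yields membership in $\ERE_{\tau\cup\{-,\div\}}$.

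To reach the strict fragment the equality atoms, all of the form ``$z_{d,v}=$ (arithmetic term)'', must be discharged, and eliminating them by substitution would reintroduce the exponential blow-up. Instead I would freeze every $\tau$-function application occurring in the formula to an opaque term and apply, to the resulting arithmetic formula, the equality-elimination of \cite{SchaeferS17} — the transformation underlying the equivalence of the strict and non-strict variants of $\ETR$ — obtaining an equivalent polynomial-size formula \emph{without equalities} over the original variables and the opaque terms, modulo some fresh existentially quantified variables. Unfreezing the opaque terms, i.e.\ substituting each one back by the short term $f(z_{d,u})$ it abbreviates, costs only a constant per occurrence and gives a closed existential equality-free formula over $\RE_{\tau\cup\{-,\div\}}$ that is satisfiable precisely when the instance is positive. (When the activations are continuous this is also reflected in the fact that the set of satisfying weight--bias pairs is then open, i.e.\ such solutions are ``stable''.)

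The main obstacle is this second step. One must check that the equality-elimination of \cite{SchaeferS17} is a \emph{pointwise} equivalence modulo its fresh variables — so that it stays correct once the opaque terms are afterwards tied to $f$ of other variables — that it acts syntactically on the equality atoms rather than exploiting the specific shape of genuine polynomials, and that division is handled compatibly with the convention that an atom with an undefined subterm is false (for instance by first removing quotients via new variables carrying the multiplicative relations). The remaining work — keeping the auxiliary-variable encoding polynomial and verifying both directions of correctness from the unique layer-by-layer evaluation of the network — is routine.
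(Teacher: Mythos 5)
Your first half is essentially the paper's construction, except that the paper never introduces the auxiliary variables $z_{d,v}$: it unfolds the recursion of Definition~\ref{def:neuralfunction} directly into terms $t_{(d,v)}$ (rational inputs written as quotients of integers, whence $-$ and $\div$ in the signature) and outputs the \emph{single} strict atom $\sum_{d\in D}t'_d<t''_\delta$, which contains no equality atoms, so membership in $\EREs_{\tau\cup\{-,\div\}}$ is immediate. Your worry that a naive unfolding can be exponential in the depth of a DAG with shared predecessors is a fair remark about that construction (it disappears if terms are represented with sharing, i.e.\ as circuits), but the repair you choose creates a problem that your second step cannot solve.

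The genuine gap is the elimination of the definitional equalities. The equality elimination of \cite{SchaeferS17} is not a pointwise syntactic equivalence modulo fresh variables: it is proved for polynomial systems only and rests on quantitative real algebraic geometry (explicit bounds on the magnitude and separation of solutions of polynomial systems), which is exactly what guarantees that an approximate solution of the perturbed strict system certifies an exact solution of the original. No such bounds are available once opaque terms standing for arbitrary $f\in\tau$ appear, and ``unfreezing'' $f(z_{d,u})$ after the transformation is precisely the step that fails. Indeed, if your scheme were sound it would show $\ERE_\tau\subseteq\EREs_{\tau\cup\{-,\div\}}$ for \emph{every} $\tau$, since an arbitrary existential $\tau$-sentence can be flattened into exactly such definitional equalities plus one inequality; combined with Theorem~\ref{strictformula} this would place $\exexp$ in $\Sigma^0_1$ and hence, by the model completeness of $\RE_{\exp}$, settle Tarski's exponential function problem --- a question the paper explicitly records as open and as equivalent to the strict-versus-nonstrict distinction. (A further warning sign: the lemma allows arbitrary, possibly discontinuous $f\in\tau$, for which none of the topological machinery behind such perturbation arguments applies.) The lemma should therefore be proved the paper's way: keep all neural values inside terms and use the one strict comparison that $\NNtauTRAINING^<$ itself supplies, rather than passing through equalities and attempting to remove them afterwards.
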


\begin{proof}
Let $(N, A_E, A_V, D, c, \prec, \delta)$ be an $\NNtauTRAINING$ instance. We show
how to construct a
formula $\phi(\bar{w}, \bar{b})$
of $\EFO(\{+,-,\times,\div,<,0,1\} \cup \tau)$
in polynomial time such that
there exists some satisfying solution for the given instance of the training problem if and only if
$\RE_\tau \models \exists \bar{w} \, \exists \bar{b} \, \phi(\bar{w}, \bar{b})$,
where $\bar{w}$ and $\bar{b}$ are finite sequences of variables that in some fixed order
encode the weights and biases
for the active edges and active neurons, respectively. We use $g$ to denote
the intended neural function for the variables in $\bar{w}$ and $\bar{b}$.
Let $m$ be the number of output neurons of $N$.

First we show, that for each data point $d \in D$ and for each neuron $v$ of the network,
there is a term $t_{(d, v)}$ which evaluates to $g_d(v)$ over $\RE_\tau$.
Namely, if $v$ is an input neuron, we can express $t_{(d, v)}$ succinctly as the corresponding input value
given by $d$, for instance,
in the form of a fraction of two binary expansions of integral values. If $v$ is a non-input neuron,
a suitable $t_{(d, v)}$ can be obtained as in expression \ref{eq:neuralactivation} of
Definition \ref{def:neuralfunction} using the terms $t_{(d, u)}$ of the predecessors $u$ of $v$ and
the sets $A_E$ and $A_V$.
Note that the underlying graph structure is given as a part of the input,
and that the graph is assumed to be acyclic.

According to Definition \ref{def:nntraining}, the cost function $c$ is expressible as an
arithmetic expression. Then, for each $d \in D$, there is some term $t'_d$ that
evaluates to $c((g_d)_O, d_O)$ over $\RE_\tau$.
Furthermore, the given rational threshold value $\delta$ can be expressed by some
constant term $t''_\delta$
Now, the relational formula \ref{eq:totalerror} of Definition \ref{def:nntraining}
is expressible as $\sum_{d \in D} t'_d < t''_\delta$,
thus yielding the claimed formula $\phi(\bar{w},\bar{b})$.
\end{proof}

Note that in the previous proof, $\phi(\bar{w}, \bar{b})$ is a single quantifier-free
relational atom.
It also defines the set of all satisfying solutions for the given training instance as
a subset of $\RE^{|A_E| + |A_V|}$.
If all the functions in $\tau$ are effectively continuous, this solution set is open in the Euclidean topology
by Lemma \ref{openformula}, and its intersection with $\Q^{|A_E| + |A_V|}$
is recursively enumerable by Theorem \ref{strictformula}.
What is more, if the complete first-order theory of the model $\RE_\tau$ is \emph{o-minimal}
(see, e.g., \cite{Servi08} for definition and discussion), then the set of satisfying
solutions has only a finite number of connected components.
In particular, the model $\RE_{\exp}$
is known to be o-minimal, whereas $\RE_{\sin}$ is not o-minimal.
However, for example in the case of the sigmoid activation function, it is an open
problem whether any positive lower bounds for the radii of open neighbourhoods
for solutions in the open solution set can be computed,
in general. By the model completeness of $\RE_{\exp}$, this question is connected
to Tarski's exponential function problem via the so-called \emph{first root conjecture}
(see, e.g., \cite{Wilkie1997, Servi08}).

\begin{lemma}\label{trainingInETRtau}
For any set $\tau$ of unary real-valued functions, the problem $\NNtauTRAINING$ is in $\ex_\tau$.
\end{lemma}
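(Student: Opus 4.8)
The plan is to mirror the construction used in the proof of Lemma \ref{NNEREstau}, the only new feature being that the relation symbol $\prec$ in the instance may now be $=$ or $\leq$ as well as $<$, and that the target language $\EFO(\{+,\times,<,=,0,1\}\cup\tau)$ of $\ex_\tau$ contains equality. Given an instance $(N, A_E, A_V, D, c, \prec, \delta)$ I would first build, in polynomial time and exactly as before, terms $t_{(d,v)}$ evaluating to $g_d(v)$, terms $t'_d$ evaluating to $c((g_d)_O, d_O)$, and a constant term $t''_\delta$ for the rational threshold, and then form the sentence $\exists \bar w\, \exists \bar b\, \big(\sum_{d\in D} t'_d \prec t''_\delta\big)$ over the signature $\{+,-,\times,\div,<,=,0,1\}\cup\tau$. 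By the same reasoning as in Lemma \ref{NNEREstau} this sentence holds in $\RE_\tau$ if and only if the training instance has a satisfying solution; when $\prec$ is $\leq$ we read the atom as the abbreviation $\theta_1<\theta_2 \lor \theta_1=\theta_2$, which is legitimate in the language of $\ex_\tau$.

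It then remains to eliminate the auxiliary symbols $-$ and $\div$ so that the sentence lies in $\EFO(\{+,\times,<,=,0,1\}\cup\tau)$. Subtraction is removed as remarked before Lemma \ref{lem:nash}, by shuffling monomials from one side of the relation to the other; equivalently, for each occurrence of a subterm $s_1 - s_2$ one introduces a fresh existential variable $z$ and the conjunct $z + s_2 = s_1$. Division requires slightly more care because of the convention that an atom with an undefined subterm is false: for each occurrence of $s_1 \div s_2$ I would introduce a fresh variable $z$ together with the conjuncts $s_2 \times z = s_1$ and $(s_2 < 0 \lor 0 < s_2)$, which jointly express precisely that $s_1/s_2$ is defined and equals $z$ (the second conjunct is what rules out the spurious case $s_1 = s_2 = 0$, where $s_2 \times z = s_1$ would otherwise hold for all $z$). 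Rational constants appearing in the $t_{(d,v)}$ or in $t''_\delta$, which are fractions of binary expansions of integers, are handled the same way, or simply by clearing these fixed denominators. The number of rewrites is linear in the length of the sentence, so the whole transformation is polynomial time, and the result is equivalent over $\RE_\tau$.

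The resulting sentence is an existential $\FO(\{+,\times,<,=,0,1\}\cup\tau)$ sentence obtained from the instance in polynomial time, so $\NNtauTRAINING \in \ex_\tau$ by definition of that class; alternatively one may apply Lemma \ref{lem:nash} to this sentence and invoke the $\ex_\tau$-completeness of $\FEASFOURtau$. I expect the only genuinely delicate point to be the faithful treatment of division under the partial-function semantics — making sure the rewriting neither creates nor destroys solutions at weight settings where a denominator vanishes — while the rest is a routine adaptation of the strict-order argument already carried out for Lemma \ref{NNEREstau}.
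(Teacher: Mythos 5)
Your proposal is correct and follows essentially the same route as the paper: reuse the atomic formula built in the proof of Lemma \ref{NNEREstau} and then eliminate the auxiliary symbols $-$ and $\div$ by introducing fresh existentially quantified variables with defining conjuncts. Your guard $(s_2 < 0 \lor 0 < s_2)$ for division under the partial-function semantics is in fact slightly more careful than the paper's one-line rewriting $\exists x(\theta(x)\land x\times z = y)$, though the extra care is not strictly needed there since division only occurs with fixed nonzero integer denominators when encoding rational constants.
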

\begin{proof}
Let $(N, A_E, A_V, D, c, \prec, \delta)$ be an $\NNtauTRAINING$ instance
and let $\phi(\bar{w}, \bar{b})$ be as in the proof of Lemma \ref{NNEREstau}.
It is enough to show that there is some formula $\phi'(\bar{w}, \bar{b})$ of
$\EFO(\{+,\times,<,=,0,1\}\cup \tau)$ that is
equivalent to $\phi(\bar{w}, \bar{b})$ over the structure $\RE_{\tau \,\cup\, \{-, \div\}}$.
For this, note that if $\theta(x)$ is a formula with some open variable $x$,
then formulae of the form $\theta(-y)$ and $\theta(y / z)$ can be defined existentially as
$\exists x (\theta(x) \land x + y = 0)$ and $\exists x (\theta(x) \land x \times z = y)$.
Repeating these steps, the function symbols $-$ and $\div$ can be eliminated from
$\phi(\bar{w}, \bar{b})$.
\end{proof}

Notice that the proof of Lemma \ref{trainingInETRtau} and the membership of the training
problem in the class
$\ex_\tau$ can be generalized to the case
where all of the data points, activation functions, threshold value and cost function
are definable using existential
formulae of the alphabet $\{+,\times,<,=,0,1\} \cup \tau$. The set $\tau$
may also include other than unary functions.

\newcommand{\ETRtauFLAT}{\operatorname{ETR_\tau-INV-FLAT}}

Next we define a decision problem for existential formulae in a certain normal form.
The definition is an extension of the problem called ETR-INV in \cite{AbrahamsenAM22},
now allowing the use of unary function symbols.

\begin{definition}\label{def:ETRtauINVFLAT}

Let $\tau$ be a set of unary real-valued functions.
Then $\ETRtauFLAT$ is the following decision problem: Given a finite set
$\calC$ of \emph{constraints}, each of one of the forms
\begin{equation}\label{constraintlist}
x = 1\text{,}\quad x + y + z = 0\text{,}\quad x \times y + 1 = 0\text{,}\quad x + f(y) = 0\text{,}
\end{equation}
where $x$, $y$ and $z$ are first-order variable symbols and $f$ is some function symbol in $\tau$,
determine whether the system of constraints of $\calC$
is satisfiable over $\RE_\tau$ using a single variable assignment.

\end{definition}

The constraint types of \ref{constraintlist} are called \emph{unit} constraints,
\emph{addition} constraints, \emph{inversion} constraints and \emph{function}
constraints, respectively.

\begin{lemma}\label{ETRtauFLATcomplete}

The problem $\ETRtauFLAT$ is $\ex_\tau$-complete.

\end{lemma}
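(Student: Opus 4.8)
The plan is to establish membership and hardness separately. Membership in $\ex_\tau$ is immediate: a conjunction of constraints of the forms in \ref{constraintlist} is literally an existential $\FO(\{+,\times,<,=,0,1\}\cup\tau)$-sentence (quantify all variables existentially), so $\ETRtauFLAT$ reduces trivially to the existential theory of $\RE_\tau$, i.e.\ is in $\ex_\tau$. Alternatively, invoke the Corollary characterising $\ex_\tau$ via $\FEASFOURtau$: each constraint is a $\tau$-polynomial equation of degree at most $2$, and a finite conjunction of such equations $P_1=0,\dots,P_k=0$ is equivalent to $\sum_i P_i^2 = 0$, a single $\tau$-polynomial of degree at most $4$.

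For $\ex_\tau$-hardness, I would reduce from $\FEASFOURtau$, which is $\ex_\tau$-complete by the Corollary. Given a $\tau$-polynomial $P(\bar x)$ of degree at most $4$, the task is to produce in polynomial time a set $\calC$ of unit/addition/inversion/function constraints that is satisfiable over $\RE_\tau$ if and only if $P$ has a root. This is exactly the Tseitin-style "flattening" used for ETR-INV in \cite{AbrahamsenAM22}, adapted to accommodate the function constraints $x + f(y) = 0$. I would introduce an auxiliary variable for each subterm of $P$: a variable pinned to $1$ via a unit constraint; sums $t = t_1 + t_2$ handled by $t' + t_1 + t_2 = 0$ together with $t + t' + z = 0$ where $z$ is forced to $0$ (e.g.\ $z+z+z=0$), so that $t = t_1+t_2$; products $t = t_1 \times t_2$ handled by first building reciprocals via the inversion constraint $u \times v + 1 = 0$ (so $v = -1/u$) and combining, exactly as the ETR-INV gadgets do; and each occurrence of a function application $f(y)$ in $P$ replaced by a fresh variable $w$ with the function constraint $w + y' = 0$ arranged so that $w = f(y)$ (introduce $y'$ with $y' + f(y) = 0$, then $w = -y'= f(y)$). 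Finally, force the top-level variable representing $P(\bar x)$ to equal $0$. Since $\tau$-polynomials allow $f$ only on variable or constant arguments, every function application already has the right shape after we name its argument with a fresh variable and an addition/unit constraint, so the reduction stays polynomial.

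The main obstacle is bookkeeping rather than conceptual: one must check that the restricted constraint forms in \ref{constraintlist} really suffice to simulate arbitrary additions and multiplications — in particular that $0$, negation, and general sums/products can all be synthesised from $x=1$, ternary zero-sums, and the single inversion form $x\times y+1=0$ — and that the only multiplicative primitive available is this inversion gadget, so multiplication $ab$ must be expressed through reciprocals (e.g.\ via $1/(1/a + 1/b)$-type identities, or the standard ETR-INV trick of writing products using inversions and additions), with care taken about the excluded value $0$ exactly as in \cite{AbrahamsenAM22}. Handling these degenerate cases (arguments or intermediate values equal to $0$) without breaking the equivalence, while keeping the construction polynomial-size, is the delicate part; but all of it is a direct adaptation of the ETR-INV completeness proof, with the new function constraints contributing only the straightforward extra gadget $x + f(y) = 0$.
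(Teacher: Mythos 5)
Your proposal is correct and follows essentially the same route as the paper: both directions hinge on adapting the ETR\nobreakdash-INV reduction of the Abrahamsen--Adamaszek--Miltzow paper to the flat constraint forms, with the only new ingredient being the straightforward gadget for function constraints $x+f(y)=0$, and with zero, negation, and binary sums synthesised from the unit and ternary-sum constraints exactly as in the paper's $v+v+v=0$ trick. The paper starts from an arbitrary existential sentence normalised via Lemma~\ref{lem:nash} rather than from $\FEASFOURtau$, but since both are $\ex_\tau$-hard this is an immaterial difference.
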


\begin{proof}
We show that the satisfiability problem of the existential theory of the
structure $\RE_\tau$ is reducible in polynomial time to $\ETRtauFLAT$;
the opposite direction is clear.
Let $\phi$ be a sentence of the logic $\EFO(\{+,\times,<,=,0,1\} \cup \tau)$.
Simplifying all functional subterms as in the proof of Lemma \ref{lem:nash}
and proceeding as in the
proof of Theorem 16 of \cite{AbrahamsenAM22},
the satisfiability of $\phi$ can be reduced in
polynomial time to a finite set $\calC'$ of constraints of the following forms:
\begin{equation*}
x = 1\text{,}\quad x + y = z\text{,}\quad x \times y = 1\text{,}\quad  x = f(y)\text{.}
\end{equation*}
For every choice of first-order variables $x$, $y$ and $z$, the following
observation is true for every variable assignment $s$:
\begin{equation*}
\RE_\tau \models_s x + y = z \iff 
\RE_\tau \models_s \exists u \exists v (v + v + v = 0 \land z + u + v = 0 \land x + y + u = 0)\text{.}
\end{equation*}
Similarly, for monomial terms $t$ and $t'$, it holds that
\begin{equation*}
\RE_\tau \models_s t = t' \iff 
\RE_\tau \models_s \exists u \exists v (v + v + v = 0 \land t + u + v = 0 \land t' + u + v = 0)\text{.}
\end{equation*}
In particular, each of the constraints in $\calC'$ can be expressed equivalently
using at most a constant number of
$\ETRtauFLAT$ constraints, producing an equivalent system of constraints.
%
%
\end{proof}



Depending on the functions in the set $\tau$,
some of the constraints in Definition \ref{def:ETRtauINVFLAT}
can be left out without losing $\ex_\tau$-completeness. For instance,
the exponential function can be used in order to switch between addition
and multiplication in existential formulae.

%
%
%

%
%
%
%
%

%
%
%

The proof of the following theorem is based
on the neural network construction of \cite{AbrahamsenKM21}, in which it was used to obtain
the first known version of an $\ex$-complete NN training problem.

\begin{theorem}\label{NNcomplete}

Let $\tau$ be a set of unary real-valued functions. Then the neural network training problem
$\NNtauTRAINING$ is $\ex_\tau$-complete.

\end{theorem}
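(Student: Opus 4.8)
The plan is to combine the membership direction, already established in Lemma~\ref{trainingInETRtau}, with a hardness reduction from the $\ex_\tau$-complete problem $\ETRtauFLAT$ of Lemma~\ref{ETRtauFLATcomplete}. Since $\NNtauTRAINING\in\ex_\tau$ is in hand, it remains to give a polynomial-time many-one reduction from $\ETRtauFLAT$ to $\NNtauTRAINING$, adapting the neural network gadgets of \cite{AbrahamsenKM21} so that the four constraint types of \ref{constraintlist} are simulated by the input-output behaviour of a carefully designed feedforward network.

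Given a finite set $\calC$ of constraints over variables $x_1,\dots,x_n$, I would build an $\NNtauTRAINING$ instance whose active edge weights and active neuron biases are in bijection with $x_1,\dots,x_n$, so that variable assignments correspond exactly to admissible choices of $(w,b)$. The cost function $c$ is taken to be a fixed faithful, non-negative arithmetic expression (for instance squared Euclidean distance), the threshold is $\delta=0$, and $\prec$ is the equality sign; by the remark following Definition~\ref{def:nntraining}, a satisfying solution then exists if and only if $g_d(v)=d(v)$ for every output neuron $v$ and every data point $d\in D$. The data set $D$ consists of a constant number of data points that feed the constants $0$ and $1$ into designated input neurons, forcing the network on these inputs to evaluate each constraint gadget in parallel.

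For the gadgets: a \emph{unit} constraint $x=1$ is enforced by an active edge of weight $x$ carrying the input value $1$ into an identity neuron whose output is required to equal $1$. An \emph{addition} constraint $x+y+z=0$ is realized by routing $x,y,z$ as weights on edges (all carrying input $1$) into a single identity neuron whose output is required to be $0$, using a neuron bias to absorb a summand if convenient. An \emph{inversion} constraint $x\times y+1=0$ exploits the fact that a path of two edges multiplies: placing active edges of weights $x$ and $y$ in series through an identity neuron with bias $0$ propagates the product $xy$ from an input $1$, which is then forced via an addition gadget to equal $-1$. A \emph{function} constraint $x+f(y)=0$ is handled by a neuron with activation function $f$ whose pre-activation is arranged (by the previous gadgets) to be $y$, so that its output is $f(y)$, after which an addition gadget forces $f(y)+x=0$; all other non-input neurons, including the output neurons, carry the identity activation. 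One verifies that the gadgets interact only through the shared variables, that the total training error vanishes precisely when all constraints of $\calC$ hold simultaneously, and that the construction is computable in time polynomial in $|\calC|$, which completes the reduction.

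The main obstacle is the inversion/multiplication gadget: in a neural network each edge multiplies a \emph{parameter} (a weight) by a \emph{propagated value}, never two parameters directly, which is exactly why the ETR-INV normal form of \cite{AbrahamsenAM22} is the convenient starting point. Care is needed so that degenerate choices (such as a zero weight, which correctly makes an inversion constraint unsatisfiable) are handled properly, that the edges fixed to weight $1$ and the neurons fixed to bias $0$ contribute nothing unintended, and that each active parameter is read consistently across all gadgets and all data points; verifying these points and that $f$ is applied to exactly the intended argument is where the construction of \cite{AbrahamsenKM21} must be followed closely and extended with the new function constraints.
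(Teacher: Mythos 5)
Your overall strategy coincides with the paper's: membership via Lemma~\ref{trainingInETRtau}, hardness by a polynomial-time reduction from the $\ex_\tau$-complete problem $\ETRtauFLAT$ of Lemma~\ref{ETRtauFLATcomplete}, with a faithful non-negative cost function, threshold $\delta=0$ and $\prec$ taken to be equality, and with variable values read off from active edge weights. However, your inversion gadget has a genuine gap, and it interacts badly with your stated setup. You propose to realize $x\times y+1=0$ by putting the edges carrying $x$ and $y$ \emph{in series}, so that the product $xy$ propagates from an input $1$. If the downstream edge in this series is the canonical (unique) active edge for $y$ --- as your claimed bijection between variables and active parameters requires --- then the source neuron of that edge must receive the propagated value $x$ in the data point testing the inversion constraint, but the value $1$ in the data point testing any unit, addition or function constraint in which $y$ occurs. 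Since all constraint gadgets touching $y$ hang off the same edge, the contribution of the $x$-path does not disappear in the other data points unless something actively cancels it, and symmetrically the input $1$ intended for the addition gadgets pollutes the inversion output. If instead the series edge labelled $y$ is a fresh parameter, nothing ties its weight to the canonical value of $y$ used elsewhere, and the reduction is unsound.

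The paper resolves exactly this in two steps that are absent from your sketch. First, the multiplication is \emph{not} a product of two canonical parameters: for each inversion constraint $C$ an extra active edge $u=(j_y,h_{(C,2)})$ and an auxiliary input neuron $e_C$ are introduced, and two dedicated data points $d_{(C,1)},d_{(C,2)}$ force $w(i_x,j_x)+u=0$ and $u\cdot w(i_y,j_y)=1$, i.e.\ $u=-x$ and $-xy=1$. This necessarily breaks the bijection between variables and active parameters. Second, every constraint-variable incidence is equipped with a cancellation pair $p_{(C,k)},q_{(C,k)}$ of auxiliary input and hidden neurons with an active edge between them, so that in the global data point $d_a$ (which tests unit, addition and function constraints in parallel) the inversion outputs are driven to $0$, and in the inversion data points the outputs of all other constraints sharing a variable with $C$ are driven to $0$. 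Your remark that ``the gadgets interact only through the shared variables'' is precisely the claim that needs this machinery; as written, the verification would fail for any instance in which a variable occurs in both an inversion constraint and some other constraint. The rest of your gadgets (unit, addition, function) match the paper's construction and are fine.
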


\begin{proof}

By Lemmas \ref{trainingInETRtau} and \ref{ETRtauFLATcomplete} it is enough to show that
the problem $\ETRtauFLAT$ has a polynomial-time reduction
to $\NNtauTRAINING$.
To this end, let $\calC$ be a finite set of $\ETRtauFLAT$ constraints. We may assume
that $\calC$ does not include unsatisfiable constraints of the form $x \times x + 1 = 0$.
Let $W$ be
the set of all variable symbols that appear in any of the constraints of $\calC$.
Next, we describe how to construct an $\NNtauTRAINING$ instance
$(N, A_E, A_V, D, c, \prec, \delta)$ in polynomial time so that this instance has a satisfying solution
if and only if the system $\calC$ of constraints is satisfiable.

First,
let $(A_V, \prec, \delta) \dfn (\varnothing, =, 0)$. In particular, biases are assumed to be $0$
for all neurons.
Let $c$ be any faithful and non-negative cost function.
In the sequel,
we will refer to the variables of individual constraints
also using indices in the set $\{1, 2, 3\}$
according to
the list
\begin{equation}\label{cindex}
x_1 = 1\text{,}\quad x_1 + x_2 + x_3 = 0\text{,}\quad x_1 \times x_2 + 1 = 0,\quad
x_1 + f(x_2) = 0\text{.}
\end{equation}
Notice that these indices are based on the positions of the variables in the constraints
in such a manner that,
for example, in a constraint of the form $x + x + y = 0$, indices $1$ and $2$ would
refer to $x$ and $3$ to $y$.
For each $C \in \calC$, let $l_C$ be the appropriate number of these indices for $C$.

We define the underlying graph $G$ of the network architecture
$N = (G, \varphi)$, case by case, based on
the sets $W$ and $\calC$ of variables and constraints.
The resulting substructure of
the network for each constraint type is depicted in Figure \ref{netfig}.
\begin{itemize}
\item For each variable symbol $x \in W$, there is a unique input neuron $i_x$ and
an edge
to an immediate successor $j_x$.
The intended meaning
of these two neurons is to encode the value of the variable $x$ into the edge weight of $(i_x, j_x)$.
The number of neurons introduced in this step is $2|W|$.


\item For each constraint $C \in \calC$, an output neuron $o_C$ is inserted.
In addition, there are unique hidden
neurons $h_{(C, k)}$ and $q_{(C, k)}$, and a unique input neuron $p_{(C, k)}$ for each
$k \in \{1, \dots, l_C\}$, and
they are incident to the edges of the network as follows:
\begin{equation*}
(j_{x_k}, h_{(C, k)})\text{,} \quad (h_{(C, k)}, o_C)\text{,} \quad
(p_{(C, k)}, q_{(C, k)}) \quad \text{and} \quad (q_{(C, k)}, h_{(C, k)})\text{.}
\end{equation*}
In total, there are at most $10|\calC|$ new neurons introduced in this case.
Here, the neurons $p_{(C, k)}$ and $q_{(C, k)}$ are intended to cancel out the
output value of the neuron $h_{(C, k)}$ for those data points that in the construction
are not intended to directly concern the constraint $C$
but that still give a non-zero value for the input neuron $i_{x_k}$.

\item For each constraint of the form $x \times y + 1 = 0$, an additional
input neuron $e_C$ is added and connected with an edge $(e_C, j_y)$ to the neuron $j_y$.
For this, at most $|\calC|$ new neurons are introduced.

\end{itemize}
As the remaining part of the network architecture $N$,
the function $\varphi$ is selected so that for each neuron $v$,
the activation function $\varphi_v$ is the identity function except for the case
that $v$ is of type $h_{(C, 2)}$ for some
function constraint $C$ of the form $x_1 + f(x_2) = 0$;
in this case, the selection $\varphi_v \dfn f$ is used.
The set $A_E$ of active edges is defined to be the set of all edges of $N$ that are either
of the form
\begin{equation*}
(i_x, j_x)\text{,} \quad (j_y, h_{(C, 2)}) \quad \text{or} \quad (p_{(C', k)}, q_{(C', k)})\text{,}
\end{equation*}
where $x$ and $y$ are variables that appear in $W$,
$C$ is an inversion constraint $z \times y + 1 = 0$
for some variable $z \in W$, and $C' \in \calC$ is a constraint
and $k \in \{1, \dots, l_{C'}\}$ a corresponding index
in the sense of \ref{cindex}.
In Figure \ref{netfig}, active edges are drawn using thicker arrows
than the other edges of the network.

\newcommand{\onefig}{
\begin{frame}
\centering
    \begin{forest}
        for tree={
        circle, draw, inner sep=4pt,
        every label/.append style = {inner sep=0pt, font=\small},
        grow=west,
        s sep = 8mm,    
        l sep = 7mm,    
        edge = {latex-},
        }
    [,label=below:$o_C$
        [,label=below:$h_{(C, 1)}$
            [,label=below:$j_x$
                [,label=below:$i_x$,edge=very thick]
            ]
            [,label=below:$q_{(C, 1)}$
                [,label=below:$p_{(C, 1)}$,edge=very thick]
            ]
        ]
    ]
    \end{forest}
\end{frame}
}

\newcommand{\addfig}{
\begin{frame}
\centering
    \begin{forest}
        for tree={
        circle, draw, inner sep=4pt,
        every label/.append style = {inner sep=0pt, font=\small},
        grow=west,
        s sep = 8mm,    
        l sep = 7mm,    
        edge = {latex-},
        }
    [,label=280:$o_C$
        [,label=above:$h_{(C, 1)}$
            [,label=below:$j_x$
                [,label=below:$i_x$,edge=very thick]
            ]
            [,label=below:$q_{(C, 1)}$
                [,label=below:$p_{(C, 1)}$,edge=very thick]
            ]
        ]
        [,label=below:$h_{(C, 2)}$
            [,label=below:$j_y$
                [,label=below:$i_y$,edge=very thick]
            ]
            [,label=below:$q_{(C, 2)}$
                [,label=below:$p_{(C, 2)}$,edge=very thick]
            ]
        ]
        [,label=below:$h_{(C, 3)}$
            [,label=below:$j_z$
                [,label=below:$i_z$,edge=very thick]
            ]
            [,label=below:$q_{(C, 3)}$
                [,label=below:$p_{(C, 3)}$,edge=very thick]
            ]
        ]
    ]
    \end{forest}
\end{frame}
}

\newcommand{\mulfig}{
\begin{frame}
\centering
    \begin{forest}
        for tree={
        circle, draw, inner sep=4pt,
        every label/.append style = {inner sep=0pt, font=\small},
        grow=west,
        s sep = 8mm,    
        l sep = 7mm,    
        edge = {latex-},
        }
    [,label=280:$o_C$
        [,label=above:$h_{(C, 1)}$
            [,label=below:$j_x$
                [,label=below:$i_x$,edge=very thick]
            ]
            [,label=below:$q_{(C, 1)}$
                [,label=below:$p_{(C, 1)}$,edge=very thick]
            ]
        ]
        [,label=280:$h_{(C, 2)}$
            [,label=below:$j_y$,edge=very thick
                [,label=below:$i_y$,edge=very thick]
                [,label=below:$e_C$]
            ]
            [,label=below:$q_{(C, 2)}$
                [,label=below:$p_{(C, 2)}$,edge=very thick]
            ]
        ]
    ]
    \end{forest}
\end{frame}
}

\newcommand{\funfig}{
\begin{frame}
\centering
    \begin{forest}
        for tree={
        circle, draw, inner sep=4pt,
        every label/.append style = {inner sep=0pt, font=\small},
        grow=west,
        s sep = 8mm,    
        l sep = 7mm,    
        edge = {latex-},
        }
    [,label=below:$o_C$
        [,label=below:$h_{(C, 1)}$
            [,label=below:$j_x$
                [,label=below:$i_x$,edge=very thick]
            ]
            [,label=below:$q_{(C, 1)}$
                [,label=below:$p_{(C, 1)}$,edge=very thick]
            ]
        ]
        [,label=below:$h_{(C, 2)}$,label=above:$f$
            [,label=below:$j_y$
                [,label=below:$i_y$,edge=very thick]
            ]
            [,label=below:$q_{(C, 2)}$
                [,label=below:$p_{(C, 2)}$,edge=very thick]
            ]
        ]
    ]
    \end{forest}
\end{frame}
}

\begin{figure}
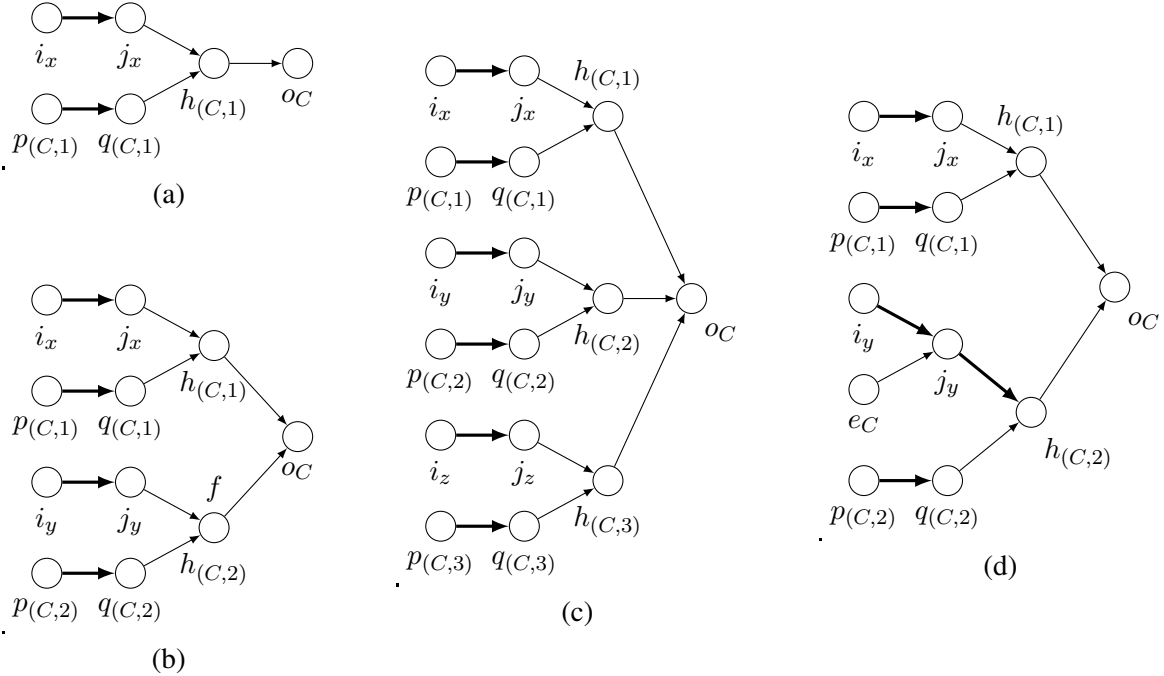

\begin{tabular}{l c r}
\begin{tabular}{c}
\onefig\\
(a)\\
\\
\\
\funfig\\
(b)
\end{tabular}
&
\begin{tabular}{c}
\addfig\\
(c)
\end{tabular}
&
\begin{tabular}{c}
\mulfig\\
(d)
\end{tabular}
\end{tabular}
\caption{(a) For a unit constraint $x = 1$, and for every other part of the network,
the value assigned to some variable $x$
is encoded in the weight of the edge from $i_x$ to $j_x$.
\quad (b) In the part of the network corresponding to a function constraint $x + f(y) = 0$,
the neuron $h_{(C, 2)}$ is activated using the given function $f$. For every other type
of neurons in the network, the identity activation function $x \mapsto x$ is used.
\quad (c) Input neurons $i_x$, $i_y$ and $i_z$, hidden neurons $j_x$, $j_y$, $j_z$ and
$h_{(C, k)}$ for $k \in \{1, 2, 3\}$, together with the output neuron $o_C$ enforce
an addition constraint $C$ of type $x + y + z = 0$. For each $k \in \{1, 2, 3\}$,
the neurons $p_{(C, k)}$ and $q_{(C, k)}$ are used to connect the corresponding variable
in the constraint with inversion constraints.
\quad (d) For an inversion constraint $C$ of the form $x \times y + 1 = 0$, an additional
input neuron, $e_C$, is introduced. An edge between the hidden levels of the network,
from $j_y$ to $h_{(C, 2)}$, is active in the training process. It is intended to carry
the negated value of the variable $x$.}
\label{netfig}
\end{figure}

Recall that all input neurons are either of the form $i_x$, $p_{(C, k)}$ or $e_C$, and
that $\{o_C \mid C \in \calC\}$ is the set of all output neurons.
Now, let $D$ consist of the following data points:
\begin{itemize}

\item There is a data point $d_a$ in $D$ that is defined as follows:
\begin{itemize}
\item For each variable symbol $x \in W$ it holds that $d_a(i_x) = 1$.
\item For each constraint $C$ of the form $x \times y + 1 = 0$ it
holds that $d_a(p_{(C, 1)}) = 1$ and $d_a(p_{(C, 2)}) = 1$.
\item For every other input neuron $v$, $d_a(v) = 0$.
\item For every output neuron $o_C$ corresponding to some unit
constraint $C \in \calC$, i.e., $C$
is $x = 1$ for some $x \in W$, it holds that $d_a(o_C) = 1$.
\item For every other output neuron $v$, we have $d_a(v) = 0$.
\end{itemize}

\item For each constraint $C \in \calC$ of the form $x \times y + 1 = 0$,
two additional data points, $d_{(C, 1)}$ and $d_{(C, 2)}$, are defined in the following steps:
\begin{itemize}
\item The data point $d_{(C, 1)}$ maps the tuple
$(i_x, i_y, e_C, o_C)$ of neurons to the tuple $(1, 0, 1, 0)$.
\item In addition, $d_{(C, 1)}$ maps to $1$ all those input
neurons of the form $p_{(C', k)}$ for which the number $k \in \{1, \dots, l_{C'}\}$
is an index for $x$ in some constraint $C' \neq C$.
\item In $d_{(C, 2)}$, the neurons of the tuple
$(i_x, i_y, e_C, o_C)$ are mapped to the values of ($0, 1, 0, 1)$.
\item The data point $d_{(C, 2)}$ has value $1$ for all input
neurons of the form $p_{(C', k)}$ for which $k \in \{1, 2, 3\}$
is an index for the variable $y$ in some $C' \in \calC$ such that $C' \neq C$.
\item If $v$ is any other input neuron or output neuron of $N$,
it holds that $d_{(C, 1)}(v) = d_{(C, 2)}(v) = 0$.
\end{itemize}
\end{itemize}
These steps introduce at most $2|\calC|$ data points.
Note that in the preceding definition of the data points $d_{(C, 1)}$ and $d_{(C, 2)}$, we use the assumption
that $\calC$ does not have constraints of the form $x \times x + 1 = 0$.

It remains to show that the $\NNtauTRAINING$ instance
$(N, A_E, A_V, D, c, \prec, \delta)$ has a satisfying solution
if and only if
the system $\calC$ is satisfiable.
For this, first assume that some edge weight function $w$ is a solution
for the training problem.
Let variable assignment $s$ be such that for each variable $x \in W$ it holds
that $s(x) = w(i_x, j_x)$. We show that $s$ satisfies $\calC$.
For this, let $C \in \calC$ be arbitrary.
\begin{itemize}
\item If $C$ is a unit constraint $x = 1$, then by the definition of $d_a$ and the non-negativeness of $c$,
it holds that $d_a(o_C) = 1$. Since $d_a(i_x) = 1$ and $d_a(p_{(C, 1)}) = 0$, we
have $w(i_x, j_x) = 1$ and so $s(x) = 1$.
\item In the case that $C$ is $x + y + z = 0$, we have $d_a(o_C) = 0$ and $d_a(p_{(C, k)}) = 0$
for each $k \in \{1, 2, ,3\}$. Thus, $w(i_x, j_x) + w(i_y, j_y) + w(i_z, j_z) = 0$ or,
equivalently, $s(x) + s(y) + s(z) = 0$.
\item If $C$ is some inversion constraint $x \times y + 1 = 0$,
the definition of the data point $d_{(C, 1)}$
leads to the following observation:
\begin{align*}
0 &= d_{(C, 1)}(o_C) \\
&= w(i_x, j_x) \times d_{(C, 1)}(i_x)
+ w(j_y, h_{(C, 2)}) \times (w(i_y, j_y) \times d_{(C, 1)}(i_y) + d_{(C, 1)}(e_C)) \\
&= w(i_x, j_x) \times 1 + w(j_y, h_{(C, 2)}) \times (w(i_y, j_y) \times 0 + 1) \\
&= w(i_x, j_x) + w(j_y, h_{(C, 2)})\text{.}
\end{align*}
Similarly, from the definition of $d_{(C, 2)}$ and the non-negativeness of the cost function $c$,
we get
\begin{align*}
1 &= d_{(C, 2)}(o_C) \\
&= w(i_x, j_x) \times d_{(C, 2)}(i_x)
+ w(j_y, h_{(C, 2)}) \times (w(i_y, j_y) \times d_{(C, 2)}(i_y) + d_{(C, 2)}(e_C)) \\
&= w(i_x, j_x) \times 0 + w(j_y, h_{(C, 2)}) \times (w(i_y, j_y) \times 1 + 0) \\
&= w(j_y, h_{(C, 2)}) \times w(i_y, j_y)\text{.}
\end{align*}
Thus it holds that $w(j_y, h_{(C, 2)}) = -w(i_x, j_x)$, and both of the claims
$w(i_x, j_x) \times w(i_y, j_y) + 1 = 0$
and $s(x) \times s(y) + 1 = 0$ are true.
\item If $C$ is $x + f(y) = 0$, we get $d_a(o_C) = 1$, $d_a(i_x) = d_a(i_y) = 1$ and
$d_a(p_{(C, k)}) = 0$ for $k \in \{1, 2\}$. From these it follows that
$w(i_x, j_x) + f(w(i_y, j_y)) = 0$ and $s(x) + f(s(y)) = 0$.
\end{itemize}
Therefore, the preceding assignment $s$ satisfies all the constraints that belong to $\calC$.

For the other direction, let $s$ be some assignment over the variables in $W$ such that
all constraints of $\calC$ are satisfied.
Let $w$ be an edge weight function defined on the set $A_E$ as follows:
\begin{itemize}
\item For each variable symbol $x \in W$, it holds that
$w(i_x, j_x) = s(x)$.
\item For each constraint $C$ of the form $x \times y + 1 = 0$,
weight value $-s(x)$ is assigned to the edge $(j_y, h_{(C, 2)})$.
The edge $(p_{(C, 2)}, q_{(C, 2)})$ is mapped to the value $s(x) \times s(y)$.
\item Every remaining $(p_{(C, k)}, q_{(C, k)}) \in A_E$
is given the weight value $-s(x_k)$, where $x_k$ is such that it matches with the index
$k \in \{1, 2, 3\}$ of constraint $C$ in the sense of \ref{cindex}.
\end{itemize}
Let $g$ be the neural function of the neural network $(N, w, \varnothing)$. It remains
to show that for every output neuron $v$ and for every data point $d \in D$ it holds that
$g_d(v) = d(v)$. Recall that for each output neuron $v$ there is a unique
$C \in \calC$ such that $v = o_C$.

We will first check the case of the data point $d_a$.
If $C \in \calC$ is not an inversion constraint, then for all indices
$k \in \{1, \dots, l_C\}$ as in \ref{cindex} it holds that
\begin{equation*}
g_{d_a}(h_{(C, k)})
= w(i_{x_k}, j_{x_k}) \times d_a(i_{x_k}) + w(p_{(C, k)}, q_{(C, k)}) \times d_a(p_{(C, k)})
= s(x_k) \times 1 + 0
= s(x_k)\text{.}
\end{equation*}
From this it follows that if $C$ is a unit constraint, then $g_{d_a}(o_C) = s(x_1) = 1$,
and if $C$ is an addition constraint, then
$g_{d_a}(o_C) = s(x_1) + s(x_2) + s(x_3) = 0$, and furthermore, if $C$ is a function
constraint of the form $x_1 + f(x_2) = 0$, then $g_{d_a}(o_C) = s(x_1) + f(x_2) = 0$.
That is, for each of the preceding cases, it holds that $g_{d_a}(o_C) = d_a(o_C)$.
On the other hand, if $C$ is an inversion constraint $x \times y + 1 = 0$, then
\begin{align*}
g_{d_a}(h_{(C, 1)})
&= w(i_x, j_x) \times d_a(i_x) + w(p_{(C, 1)}, q_{(C, 1)}) \times d_a(p_{(C, 1)}) \\
&= s(x) \times 1 + (-s(x)) \times 1
= 0\text{,}
\end{align*}
and
\begin{align*}
g_{d_a}(h_{(C, 2)})
&= \left(w(j_y, h_{(C, 2)}) \times (w(i_y, j_y) \times d_a(i_y) + d_a(e_C))\right)
+ w(p_{(C, 2)}, q_{(C, 2)}) \times d_a(p_{(C, 2)}) \\
&= \left(-s(x) \times (s(y) \times 1 + 0) + (s(x) \times s(y)) \times 1\right)
= 0\text{.}
\end{align*}
Thus, it holds that $g_{d_a}(o_C) = 0 = d_a(o_C)$.

For the remaining cases, let $C$ be a fixed inversion constraint $x \times y + 1 = 0$.
We will go through the data points $d_{(C, 1)}$ and $d_{(C, 2)}$ simultaneously.
First, let $d \in \{d_{(C, 1)}, d_{(C, 2)}\}$ and
$C' \in \calC \setminus \{C\}$ be arbitrary.
Then, for each $k \in \{1, \dots, l_{C'}\}$ it holds that $h_{(C', k)} = 0$,
namely, either it holds that both $d(x_k) = 0$ and $d(p_{(C', k)}) = 0$ are true, or
that both $d(x_k) = 1$ and $d(p_{(C', k)}) = 1$ are true.
Therefore, we obtain the result $g(o_{C'}) = 0 = d(o_{C'})$.

For the inversion constraint $C$ and neuron $h_{(C, 1)}$, for each
$d \in \{d_{(C, 1)}, d_{(C, 2)}\}$ it holds that
\begin{equation*}
g_d(h_{(C, 1)})
= w(i_x, j_x) \times d(i_x)
+ w(p_{(C, 1)}, q_{(C, 1)}) \times d(p_{(C, 1)})
= s(x) \times d(i_x) + 0\text{.}
\end{equation*}
From this we get $g_{d_{(C, 1)}}(h_{(C, 1)}) = s(x)$ and $g_{d_{(C, 2)}}(h_{(C, 1)}) = 0$.
On the other hand, for the hidden neuron $h_{(C, 2)}$
and $d \in \{d_{(C, 1)}, d_{(C, 2)}\}$ we have
\begin{align*}
g_d(h_{(C, 2)})
&= \left(w(j_y, h_{(C, 2)}) \times (w(i_y, j_y) \times d(i_y) + d(e_C))\right)
+ w(p_{(C, 2)}, q_{(C, 2)}) \times d(p_{(C, 2)}) \\
&= -s(x) \times (s(y) \times d(i_y) + d(e_C)) + 0 \\
&= -s(x) \times s(y) \times d(i_y) - s(x) \times d(e_C)\text{.}
\end{align*}
Then, $g_{d_{(C, 1)}}(o_C) = s(x) - s(x) = 0$ and
$g_{d_{(C, 2)}}(o_C) = 0 - s(x) \times s(y) = 1$, from which it follows that
$g_d(o_C) = d(o_C)$ for each $d \in \{d_{(C, 1)}, d_{(C, 2)}\}$.
This concludes the proof the theorem.
\end{proof}


In the previous proof, it is enough to have a single unit constraint
in the set $\calC$. Addition constraints can be combined under
a unified output neuron, if multiple data points are allowed in order to take
separate addition constraints into account. Here, we can use the fact that
the indices of the input neurons $p_{(C, k)}$ are based on the numbers $k$
instead of the corresponding variable $x_k$; multiple occurrences of a variable in
an unified addition constraint can be treated separately.
Furthermore, as noted in \cite{AbrahamsenKM21}, it can be
assumed that each variable is incident to at most one inversion constraint.
In this manner, only a constant number of data points is needed for all the
inversion constraints that are in $\calC$.


As a reminder, the sigmoid activation function $\sigma \colon \RE \to \RE$
is defined as $\sigma(x) = 1 / (1 + \exp(-x))$ for
each $x \in \RE$. Using the previous theorem, we can now connect sigmoidal NN training
with exponential real arithmetic.

\begin{corollary}
$\NNsigmoidTRAINING$ is $\exexp$-complete.
\end{corollary}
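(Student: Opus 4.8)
The plan is to derive the corollary from Theorem~\ref{NNcomplete} by identifying the ambient complexity class with $\exexp$. Instantiating Theorem~\ref{NNcomplete} with $\tau = \{\sigma\}$ gives that $\NNsigmoidTRAINING$ is $\ex_\sigma$-complete, so it suffices to prove $\ex_\sigma = \exexp$. Both classes are defined as the problems polynomial-time many-one reducible to the existential theory of the corresponding expanded structure, so it is enough to exhibit polynomial-time many-one reductions in both directions between the truth problem for closed $\EFO(\{+,\times,<,=,0,1,\sigma\})$-sentences over $\RE_\sigma$ and that for closed $\EFO(\{+,\times,<,=,0,1,\exp\})$-sentences over $\RE_{\exp}$; this is just the (effective) interdefinability of $\sigma$ and $\exp$.

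For $\ex_\sigma \subseteq \exexp$ I would use $\sigma(x) = 1/(1+\exp(-x))$. Since $1+\exp(-x) > 0$ for all $x$, the equation $u = \sigma(t)$ is equivalent over $\RE_{\exp}$ to $\exists v\,\exists w\,(w + t = 0 \land v = \exp(w) \land u + u\times v = 1)$, a division-free existential $\exp$-formula. Given a closed $\EFO(\{+,\times,<,=,0,1,\sigma\})$-sentence $\phi$, flatten its $\sigma$-subterms innermost-first as in the proof of Lemma~\ref{lem:nash} so that $\sigma$ occurs only in atoms $u = \sigma(t)$, then replace each such atom by the displayed formula (prenexing the fresh existential quantifiers). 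The result is computable in polynomial time and is a closed $\EFO(\{+,\times,<,=,0,1,\exp\})$-sentence true over $\RE_{\exp}$ if and only if $\phi$ is true over $\RE_\sigma$.

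For the converse $\exexp \subseteq \ex_\sigma$ I would symmetrically use $\exp(x) = \sigma(x)/(1-\sigma(x))$, valid because $0 < \sigma(x) < 1$ for every $x$ so that $1-\sigma(x) \neq 0$; concretely $y = \exp(t)$ is equivalent over $\RE_\sigma$ to $\exists s\,(s = \sigma(t) \land y = y\times s + s)$. Flattening the $\exp$-subterms and substituting this for each atom $y = \exp(t)$ yields in polynomial time an equivalent closed $\EFO(\{+,\times,<,=,0,1,\sigma\})$-sentence. The two reductions give $\ex_\sigma = \exexp$, and combining this with the $\ex_\sigma$-completeness of $\NNsigmoidTRAINING$ from Theorem~\ref{NNcomplete} yields that $\NNsigmoidTRAINING$ is $\exexp$-complete.

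I do not expect a real obstacle here. The only point requiring any care is the choice of the two defining identities together with the observation that each makes the denominator ($1+\exp(-x)$, resp.\ $1-\sigma(x)$) provably nonzero, so each direction can be phrased purely with polynomial equations and no auxiliary inequalities, case splits, or partial-function subtleties are needed; the rest (subterm flattening, the polynomial-time bound, correctness under nested occurrences) is entirely routine.
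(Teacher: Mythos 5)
Your proposal is correct and follows essentially the same route as the paper: apply Theorem~\ref{NNcomplete} with $\tau=\{\sigma\}$ and then use the mutual existential definability of $\sigma$ and $\exp$ over the reals to identify $\ex_\sigma$ with $\exexp$. The paper merely asserts this interdefinability, whereas you spell out the explicit division-free defining formulas and the flattening step; this is the same argument in more detail.
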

\begin{proof}
By Theorem \ref{NNcomplete}, $\NNsigmoidTRAINING$ is $\ex_\sigma$-complete. Furthermore,
the sigmoid activation function is existentially definable in $\exexp$
and similarly the exponential function is existentially definable in $\ex_\sigma$.
Thus it follows that $\NNsigmoidTRAINING$ is $\exexp$-complete.
\end{proof}

\begin{corollary}
$\NNsinTRAINING$ is undecidable but in $\Sigma^0_3$.
\end{corollary}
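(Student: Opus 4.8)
The plan is to obtain this as a direct corollary of Theorem~\ref{NNcomplete} together with Theorem~\ref{sincomplete} (and, for the upper bound, optionally Theorem~\ref{EREECFSigma3}). First I would instantiate Theorem~\ref{NNcomplete} with $\tau = \{\sin\}$ to conclude that $\NNsinTRAINING$ is $\ex_{\sin}$-complete; in particular $\NNsinTRAINING \in \ex_{\sin}$ and every problem in $\ex_{\sin}$ has a polynomial-time many-one reduction to $\NNsinTRAINING$.

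For the membership $\NNsinTRAINING \in \Sigma^0_3$, I would note that $\sin$ is effectively continuous (it is one of the basic trigonometric functions listed after Lemma~\ref{comp}), so $\ex_{\sin} \subseteq \ERE_{\ECF}$, and by Theorem~\ref{EREECFSigma3} we have $\ERE_{\ECF} \subseteq \Sigma^0_3$; this is also recorded directly as $\ex_{\sin} \subseteq \Sigma^0_3$ in Theorem~\ref{sincomplete}. Since $\NNsinTRAINING \in \ex_{\sin}$ and $\Sigma^0_3$ is closed under polynomial-time many-one reductions, we obtain $\NNsinTRAINING \in \Sigma^0_3$.

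For undecidability I would use the $\Sigma^0_1$-hardness half of Theorem~\ref{sincomplete}, namely $\Sigma^0_1 \subseteq \ex_{\sin}$. Composing the polynomial-time reduction witnessing $\Sigma^0_1 \subseteq \ex_{\sin}$ with the $\ex_{\sin}$-hardness of $\NNsinTRAINING$ yields a polynomial-time many-one reduction from every $\Sigma^0_1$ problem to $\NNsinTRAINING$; taking the halting problem (which lies in $\Sigma^0_1 \setminus \Sigma^0_0$ by Post's theorem) shows that $\NNsinTRAINING$ is not decidable.

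I do not expect any genuine obstacle here; the only thing requiring a little care is the bookkeeping about closure under reductions. On the one hand, $\ex_\tau$ is closed under polynomial-time many-one reductions by definition, and on the other hand $\Sigma^0_3$ is as well, so the chains of reductions compose legitimately and no step leaves the intended class. (If one preferred to avoid the appeal to Theorem~\ref{sincomplete} for the hardness part, the undecidability of the existential theory of $\RE_{\sin}$ — via Richardson's theorem or the results of \cite{Laczkovich} — together with $\ex_{\sin}$-hardness of $\NNsinTRAINING$ would do equally well, but routing everything through Theorems~\ref{NNcomplete} and~\ref{sincomplete} is the most economical argument.)
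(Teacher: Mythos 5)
Your proposal is correct and is essentially the paper's own proof, which simply cites Theorems~\ref{sincomplete} and~\ref{NNcomplete}; you have just unfolded the same two-step argument (instantiate $\tau=\{\sin\}$ for completeness, then read off the lower and upper bounds from $\Sigma^0_1\subseteq\ex_{\sin}\subseteq\Sigma^0_3$). The only bookkeeping nuance is that the paper explicitly assumes $\ex_{\sin}$ closed under \emph{computable} (not necessarily polynomial-time) reductions for Theorem~\ref{sincomplete}, so the hardness chain you compose is a computable many-one reduction rather than a polynomial-time one---which is still ample for undecidability.
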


\begin{proof}
The claim follows from Theorems \ref{sincomplete} and \ref{NNcomplete}.
\end{proof}

Similar to the proof of Theorem \ref{sincomplete},
undecidability can be generalized to other functions having
some form of periodic nature that can be used to define natural numbers with
existential formulae.

\section{Conclusion}
We have studied the complexity of the neural network training problem.
We showed that the training problem corresponding to sets $\tau$ of activation functions are
complete for the complexity classes $\ex_\tau$ that extend the reals with the corresponding functions. We also showed that for sets $\tau$ of
effectively continuous functions, these problems reside in $\Sigma^0_3$,
and even in $\Sigma^0_1$ when defined without the equality sign and negation. The following topics deserve further study:
\begin{itemize}
\item 
Could our $\Sigma^0_1$ upper bound for the identity-free existential theory of the reals with effectively continuous functions be improved or extended.
Note that for decidability of the exponential arithmetic it would suffice to establish an $\Sigma^0_1$-upper bound for its existential fragment \cite{Wilkie1997,Servi08}.
\item The proof of our main completeness result (Theorem \ref{NNcomplete}) is based on the use of adversarial
network architectures (see \cite{Bertschinger22}). It is an open question whether
our result holds if the NN architecture is assumed to be a fully connected graph
with all edge weights and neuron biases being active, or when restricted to the case that all neurons use the same activation function.


\item For many non-negative cost functions it is known that
the computational complexity of an exact total training error $0$ is at
most the complexity of an error bounded by some positive value (allowing equality),
whereas
an upper bound given by the strict order relation $<$ seems to be independent of the two.
Recall that these cases correspond to our upper bounds $\Sigma^0_3$ and $\Sigma^0_1$,
and that in the case of the exponential function, they are also connected to
Tarski's exponential function problem.

\item Another topic to study are training problems with limited precision.
For example, does the complexity of the training problem decrease for some activation functions if edge weights and biases are assumed to be rational?

\end{itemize}






%

\bibliographystyle{IEEEtran}
\bibliography{biblio}

\end{document}